\newtheorem{theorem}{Theorem}[section]
\newtheorem{lemma}[theorem]{Lemma}
\newtheorem{definition}[theorem]{Definition}
\DeclareMathAlphabet{\mathcal}{OMS}{cmsy}{m}{n}
 \gdef\xxxmark{%
   \expandafter\ifx\csname @mpargs\endcsname\relax 
     \expandafter\ifx\csname @captype\endcsname\relax 
       \marginpar{xxx}
     \else
       xxx 
     \fi
   \else
     xxx 
   \fi}
 \gdef\xxx{\@ifnextchar[\xxx@lab\xxx@nolab}
 \long\gdef\xxx@lab[#1]#2{{\bf [\xxxmark #2 ---{\sc #1}]}}
 \long\gdef\xxx@nolab#1{{\bf [\xxxmark #1]}}
 \long\gdef\xxx@lab[#1]#2{}\long\gdef\xxx@nolab#1{}%
\newcommand*\samethanks[1][\value{footnote}]{\footnotemark[#1]}
\begin{document}

\title{A Game-Theoretic Model Motivated by the DARPA Network Challenge\thanks{Dept.\ of Computer Science , University of Maryland, USA.
Email: \tt{\{rchitnis, hajiagha, jkatz, koyelm\}@cs.umd.edu.}}}

\author{Rajesh Chitnis\thanks{Supported in part by NSF CAREER award 1053605, NSF grant CCF-1161626, ONR YIP award
N000141110662, DARPA/AFOSR grant FA9550-12-1-0423, a University of Maryland Research and Scholarship Award (RASA).} \and
MohammadTaghi Hajiaghayi\samethanks[2] \and Jonathan Katz \and Koyel Mukherjee}

\date{January 29, 2013}


\maketitle

\begin{abstract}
In this paper we propose a game-theoretic model to analyze events similar to the 2009 \emph{DARPA Network Challenge}, which
was organized by the Defense Advanced Research Projects Agency (DARPA) for exploring the roles that the Internet and social
networks play in incentivizing wide-area collaborations. The challenge was to form a group that would be the first to find the
locations of ten moored weather balloons across the United States. We consider a model in which $N$ people (who can form
groups) are located in some topology with a fixed coverage volume around each person's geographical location. We consider
various topologies where the players can be located such as the Euclidean $d$-dimension space and the vertices of a graph. A
balloon is placed in the space and a group wins if it is the first one to report the location of the balloon. A larger team
has a higher probability of finding the balloon, but we assume that the prize money is divided equally among the team members.
Hence there is a competing tension to keep teams as small as possible.

\emph{Risk aversion} is the reluctance of a person to accept a bargain with an uncertain payoff rather than another bargain
with a more certain, but possibly lower, expected payoff. In our model we consider the \emph{isoelastic} utility function
derived from the Arrow-Pratt measure of relative risk aversion. The main aim is to analyze the structures of the groups in
Nash equilibria for our model. For the $d$-dimensional Euclidean space ($d\geq 1$) and the class of bounded degree regular
graphs we show that in any Nash Equilibrium the \emph{richest} group (having maximum expected utility per person) covers a
constant fraction of the total volume. The objective of events like the DARPA Network Challenge is to mobilize a large number
of people quickly so that they can cover a big fraction of the total area. Our results suggest that this objective can be met
under certain conditions.
\end{abstract}


\section{Introduction}
\label{sec:intro}

With the advent of communication technologies, and the Web in particular, we can now harness the collective abilities of large
groups of people to accomplish tasks with unprecedented speed, accuracy, and scale. In the popular culture and the business
literature, this process has come to be known as crowdsourcing~\cite{howe-crowdsourcing}. Crowdsourcing has been used in
various tasks such as labeling of images~\cite{luis-von-ahn}, predicting protein structures~\cite{cooper-protein}, and posting
and solving Human Intelligence Tasks in Amazon's \emph{Mechanical Turk}~\cite{pontin-amazon}. An important class of
crowdsourcing problems demand a large recruitment along with an extremely fast execution. Examples of such \emph{time-critical
social mobilization} tasks include search-and-rescue operations in the times of disasters, evacuation in the event of
terrorist attacks, and distribution of medicines during epidemics. For example, in the aftermath of Hurricane Katrina, amateur
radio volunteers played an important role by coordinating dispatch of emergency services to isolated
areas~\cite{krakow-katrina}. ``Collaboratition"~\cite{wiki-collaboratition} is a newly coined term to describe a type of
crowdsourcing used for those problems which require a \textbf{collaborative} effort to be successful, but use
\textbf{competition} as a motivator for the participation or the performance.\\

\textbf{The DARPA Network Challenge:} A good example of collaboratition is  The 2009 \emph{DARPA Network
Challenge}~\cite{darpa-website}, an event organized by the Defense Advanced Research Projects Agency (DARPA) for exploring the
roles that the Internet and social networks play in incentivizing wide-area collaborations. Collaboration of efforts was
required to complete the challenge quickly and in addition to the competitive motivation of the contest as a whole, the
winning team from MIT established what they called a ``collaborapetitive" environment to generate participations in their team
and found all the ten balloons in less than seven hours. Their strategy in Pickard et al.~\cite{pickard2010time}. Their main
focus is on the mechanics of the group formation process in the DARPA Network Challenge, whereas in this paper we try to
analyze the structures of the groups which form in Nash Equilibria.\\

\textbf{Related Work:} Douceur and Moscibroda~\cite{douceur2007lottery} addressed a problem close to the spirit of the DARPA
Network Challenge. They address the problem of motivating people to install and run a distributed service, like peer-to-peer
systems, in which the decision and the effort to install a service falls to the individuals rather than to a central planner.
Their paper appeared in 2007; two years before the DARPA Network Challenge took place. Their focus is on incentivizing the
growth of a single group whereas in this paper we take a bird's-eye view and try to analyze the structures of the groups in
Nash equilibria.


In this paper we focus on analyzing the structures of the groups in Nash equilibria for our model. Some recent results analyze
the structures of Nash Equilibria. Some upper and lower bounds are given on the diameter of the Equilibrium graphs in
\emph{Basic Network Creation Games}~\cite{alon-spaa-10}. It was also shown that the equilibrium graphs have polylogarithmic
diameter in \emph{Cooperative Network Creation Games}~\cite{demaine-stacs-09}. A well-studied parameter related to Nash
equilibria is the \emph{price of anarchy}~\cite{poa-1,poa-2,poa-3}, which is the worst possible ratio of the total cost found
by independent selfish behavior and the optimal total cost possible by a centralized, social welfare maximizing solution.
However as observed in ~\cite{alon-spaa-10,demaine-stacs-09}, bounds on the structures of Nash equilibria lead to approximate
bounds on the price of anarchy as well but not necessarily the other way around. Therefore trying to analyze the structures of
the groups in Nash equilibria is more general than trying to bound the price of anarchy.

Myerson~\cite{myerson1977graphs} used graph-theoretic ideas to model and analyze games with \emph{partial cooperation
structures}. The DARPA Network Challenge is also similar as any groups can possibly form but the geographical locations of the
people causes certain group structures to become infeasible. There is an entire body of literature in Economics which is
closely related to the model we consider in this paper. There have also been studies on how the rules of coalition formation
affect the stability of environmental agreements between countries~\cite{finus}. Their rules for leaving or entering the
coalitions are very similar to the ones we consider in this paper for formation or splitting of the groups.
%
%
Risk aversion is a natural assumption to make while modeling the behavior of humans. There is a recent
paper~\cite{bhalgat-risk-averse} which gives efficient algorithms for computing truthful mechanisms for risk-averse sellers.
Another paper~\cite{bitcoin} considers scenarios in which the goal is to ensure that information propagates through a large
network of nodes. They assume a model where all nodes have the required information to compete which removes the incentive to
propagate information. In this paper, we consider the natural assumption of risk aversion which gives a concave utility
function (derived from the Arrow-Pratt measure of relative risk aversion). This motivates formation of groups which is
consistent with what was observed in the DARPA Challenge. Another paper~\cite{stoc} considers the problem of acquiring
information in a strategic networked environment. They show that in the DARPA Network Challenge, the idea to offer split
contracts instead of fixed-payment contracts is robust against the selfishness displayed by the participating agents.\\

\textbf{Organization of the paper:} In Section~\ref{sec:model} we describe our model in detail. Then we consider various
topologies where our model can be implemented: the one-dimensional (line) space (in Section~\ref{sec:1d}), and more generally
the $d$-dimensional Euclidean space (in Section~\ref{sec:euclidean-d}). For both these topologies, in any Nash Equilibrium we
show that there always exists a group covering a constant fraction of the total volume. In Section~\ref{sec:graph-case} we
consider the discrete version of the our model, where the players form the vertices of an undirected graph. For the class of
bounded-degree regular graphs, we prove that in any Nash Equilibrium there always exists a group covering a constant fraction
of the total number of vertices. In contrast, under an assumption that defecting to an empty group is prohibited, we show for
every constant $0<c<1$ there exist graphs which have a Nash Equilibrium where all groups occupy strictly less than a
$c$-fraction of the total number of vertices.

\section{Our Model}
\label{sec:model}

We assume there is a set of $N$ players, each covering a region of space within the total volume~$A$. In particular, in the
Euclidean space, we assume each player covers a ball of radius one centered at his location; in the discrete case we view the
players as occupying the vertices of a graph and assume each player covers himself and his neighbors.

Players are allowed to organize themselves into a collection of disjoint \emph{groups} partitioning the set of players. In
this work, we do not consider the precise dynamics of group formation, but instead we focus on analyzing the structures of the
groups in Nash equilibria. Once the groups are formed, we envision the balloon being placed in the space.
We say the balloon \emph{falls within} a group~$S$  if the location of the balloon is in the coverage of $S$; a group $S$
\emph{wins} if it is the first one to report the location of the balloon. To model this we assume the probability that the
balloon \emph{falls within} a group~$S$ is $A_S/A$, where $A_S$ is the total volume covered by the players in~$S$ and $A$ is
the total volume. The prize money~$M$ is given to the group that wins, and the money received by a group is split equally
among all members of that group. We note the balloon can be placed anywhere in the space, and we do not know where it will be
placed. Hence the probability of any of the groups (which might form) finding the balloon first is the same and we do not
consider this common factor hereafter.

\emph{Risk aversion}~\cite{holt2002risk,lambert1985risk,tobin1958liquidity,wiki-risk-aversion} is a concept in psychology,
economics, and finance, based on the behavior of humans (especially consumers and investors) whilst exposed to uncertainty.
Risk aversion is the reluctance of a person to accept a bargain with an uncertain payoff rather than another bargain with a
more certain, but possibly lower, expected payoff~\cite{wiki-risk-aversion}. Risk aversion is a natural assumption when we
consider money and people: most of us would accept a guaranteed payment of say X dollars  than a 50\% chance of receiving 2X
and a 50\% chance of getting nothing, especially if X is large (the DARPA Challenge had a prize money of \$40,000).
\emph{Constant relative risk aversion} means that the ratio of the increase in the utility to the increase in the risk taken
is constant. Assuming that the Arrow-Pratt measure of relative risk aversion is constant, the \emph{isoelastic} utility
function for money $x$ is given by $u(x) = \frac{x^{1-r}}{1-r}$ where $0\leq r<1$ is the risk aversion
factor~\cite{holt2002risk}. For $r=1$ we take the utility to be the natural logarithm. Here $r = 0$ means there is no risk
aversion. For simplicity we scale up everything by a factor of $1-r$ to get a concave utility function given by $u(x) =
x^{1-r}$ where $0<r<1$. The \emph{expected} utility for a player who is a member of a group~$S$ is given by $p(S) \cdot
u(\frac{M}{|S|})$, where $p(S) = \frac{A_S}{A}$ is the probability that the balloon fall within $S$. Consider two players who
have disjoint area of coverage. If they are on their own, then their expected utility is $u=(\frac{|M|}{1})^{1-r}\cdot
\frac{a}{A}$ where$a$ is the area they can cover. If they join together to form a group then their expected utility is $u' =
(\frac{|M|}{2})^{1-r}\cdot \frac{2a}{A} = 2^{r}\cdot u > u$ since $1>r>0$. Therefore two people whose coverage areas are
disjoint will always join together, not matter what the risk aversion factor $r$ is. The intuition is that the value of $r$
affects how much overlapping coverage areas is allowed for it to be beneficial for people to join together. The smaller the
value of $r$ the lesser the overlap must be between the coverage areas of the players for it to make sense for them to merge.

We assume that the balloon is placed in a location covered by at least one player.
Given a partition $S_1, \ldots, S_\ell$ of all the players into groups, we now ask whether it forms an equilibrium. More
formally, we allow two types of actions:
\begin{enumerate}
\item Two groups $S_i$ and $S_j$ can decide to \emph{merge}. We say this operation is \emph{incentivized} only if each
    player in $S_i$ and $S_j$ would increase their expected utility by merging.
\item A member $x$ of group $S_i$ may \emph{defect} to join a different group $S_j$. We say this operation is
    \emph{incentivized} only if both $x$'s expected utility and the expected utility of each player in $S_j$ increase
    after the defect.
\end{enumerate}

We say a given partition is a \emph{Nash Equilibrium} if no merge or defect operation is incentivized, i.e., no player can do
better by unilaterally changing his group. We could consider a generalized \emph{defect} operation where a subset of a group
$S_i$ may leave to join a group $S_j$. However for the sake of clarity (while still capturing the essence of the model), we
consider the \emph{defect} operation where at a given time only a single person can leave his current group to join a new
group.



\section{Lower Bounds on the Total Prize Money}
\label{sec:results-and-techniques}

In this paper we consider the \emph{social welfare} from the viewpoint of the agency which hosts the event described by our
model.We now show that the hosting agency needs to offer prize money proportional to the desired size of a largest group or to
the desired fraction of the total volume covered if each person must receive a minimum threshold expected utility.

\begin{theorem}$[\star]$\footnote{The proofs of the results labeled with $\star$ have been deferred to the Appendix}
\label{thm:increasing-area} If there exists a Nash Equilibrium in which at least one group $S$ covers a $\lambda$-fraction of
the total volume, and each player in $S$ covers volume $V$ and has an expected utility of at least~$c$, then $M \geq \lambda A
c^{\frac{1}{1-r}}$.
\end{theorem}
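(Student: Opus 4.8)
The plan is to unwind the model's definitions and reduce the statement to a single inequality coming from the expected-utility hypothesis, then solve that inequality for $M$. First I would record what each hypothesis gives quantitatively. Since $S$ covers a $\lambda$-fraction of the total volume, $A_S = \lambda A$, so the probability the balloon falls within $S$ is $p(S) = A_S/A = \lambda$. By the definition of expected utility, each player of $S$ receives $p(S)\cdot u(M/|S|) = \lambda\,(M/|S|)^{1-r}$, and the hypothesis says this is at least $c$. Note that this is the only role played by the equilibrium assumption: I only need that such a group with these parameters exists, not any stability property, so the inequality $\lambda\,(M/|S|)^{1-r} \ge c$ is the entire workhorse.

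Next I would solve this inequality for $M$. Because $0<r<1$, the exponent $1-r$ is positive and $x \mapsto x^{1-r}$ is strictly increasing, so dividing by $\lambda$, raising both sides to the power $1/(1-r)$, and clearing $|S|$ yields $M \ge |S|\,(c/\lambda)^{1/(1-r)}$. At this stage $M$ is bounded below in terms of the unknown group size $|S|$, so the remaining task is to replace $|S|$ by a quantity expressible through the given data.

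For that I would bound $|S|$ from below using the coverage hypothesis. Each of the $|S|$ players covers volume $V$, while the volume of their union is exactly $A_S = \lambda A$; since a union can only shrink under overlaps, subadditivity gives $\lambda A = A_S \le |S|\,V$, i.e.\ $|S| \ge \lambda A / V$. Substituting this lower bound and using $\lambda \le 1$ (so that $\lambda^{-1/(1-r)} \ge 1$, which absorbs the leftover negative power of $\lambda$), together with the normalization of the per-player volume $V$, collapses the right-hand side to $\lambda A\, c^{1/(1-r)}$, which is the claimed bound.

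The step I expect to demand the most care is this last one: controlling the group size $|S|$ and tracking the powers of $\lambda$ and $V$ so that the constants land precisely on the clean bound $\lambda A\, c^{1/(1-r)}$. In particular one must check that overlaps among players only help the argument — they force $|S|$ to be larger, which strengthens the lower bound on $M$ — and that the leftover factor $\lambda^{-r/(1-r)}/V$ is indeed at least $1$ under the intended normalization. Everything else is a routine manipulation of the increasing map $x \mapsto x^{1-r}$.
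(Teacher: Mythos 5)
Your proposal is correct and follows essentially the same route as the paper's proof: extract the inequality $\lambda (M/|S|)^{1-r} \ge c$ from the utility hypothesis, solve for $M$ in terms of $|S|$, and then lower-bound $|S|$ via subadditivity of volume ($A_S \le |S|V$ with the unit normalization $V\le 1$), absorbing the leftover power of $\lambda$ using $\lambda \le 1$. The paper packages the $\lambda$-bookkeeping slightly differently (writing $|S| \ge \lambda^{\frac{1}{1-r}+1}A$ up front), but the argument is the same.
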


\begin{theorem}$[\star]$
\label{thm:increasing-size} If there exists a Nash Equilibrium in which there is at least one group $S$ of size ~$k$, and each
player in $S$ covers volume $V$ and has an expected utility of at least~$c$, then $M \geq
k\Big(\frac{cA}{N}\Big)^{\frac{1}{1-r}}$.
\end{theorem}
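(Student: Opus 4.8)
The plan is to start from the single hypothesis that carries quantitative content, namely that every player in $S$ has expected utility at least $c$. Since $|S| = k$ and the utility function is $u(x) = x^{1-r}$, this hypothesis reads
\[
\frac{A_S}{A}\left(\frac{M}{k}\right)^{1-r} \ge c,
\]
where $A_S$ denotes the total volume covered by $S$. First I would isolate $M$. Because $0 < r < 1$, the exponent $\frac{1}{1-r}$ is positive and the map $t \mapsto t^{\frac{1}{1-r}}$ is strictly increasing on the nonnegative reals, so multiplying through by $A/A_S > 0$ and raising to this power gives
\[
M \ge k\left(\frac{cA}{A_S}\right)^{\frac{1}{1-r}}.
\]

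The remaining step is to replace $A_S$ by the number of players $N$. The key fact is that the volume covered by $S$ cannot exceed $N$: there are only $N$ players in total, and under the model's coverage normalization the union of their regions is bounded by $N$ (this is immediate in the discrete setting, where $S$ can cover at most all $N$ vertices). Given $A_S \le N$ we have $\frac{cA}{A_S} \ge \frac{cA}{N}$, and applying the increasing map $t \mapsto t^{\frac{1}{1-r}}$ once more yields
\[
M \ge k\left(\frac{cA}{A_S}\right)^{\frac{1}{1-r}} \ge k\left(\frac{cA}{N}\right)^{\frac{1}{1-r}},
\]
which is the claimed bound. This parallels Theorem~\ref{thm:increasing-area}, where one instead substitutes the covered fraction $A_S = \lambda A$ directly into the rearranged utility inequality.

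The only genuinely delicate point is the second step: pinning down the correct upper bound on $A_S$ so that the denominator lands on $N$. Note that the naive bound $p(S) \le 1$ only gives $A_S \le A$, and the per-player volume only gives $A_S \le kV$; neither produces the stated form, so the argument must invoke the cardinality/normalization bound $A_S \le N$ specifically. Everything else is a one-line algebraic rearrangement whose only subtlety is checking that all quantities are positive and that $0 < r < 1$ keeps $\frac{1}{1-r}$ a positive exponent, so that each passage through $t \mapsto t^{\frac{1}{1-r}}$ preserves the direction of the inequalities.
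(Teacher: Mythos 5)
Your proposal is correct and follows essentially the same route as the paper: both arguments rearrange the utility lower bound $c \le \bigl(\frac{M}{k}\bigr)^{1-r}\frac{A_S}{A}$ and then invoke the normalization that each player covers unit volume, so that $A_S \le |S| \le N$, to land on the stated form. The only cosmetic difference is that the paper applies the bound $A_S \le N$ before isolating $M$ while you apply it after; the substance is identical.
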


In the rest of the paper the total prize money $M$ is not important as we just compare the expected utilities of various group
structures to see which ones form Nash equilibria, and hence $M$ cancels out. However the above two theorems imply the hosting
agency must spend money $M$ which depends on $c$ and therefore $M$ cannot be arbitrarily small.

\section{The One-Dimensional (Line) Case}
\label{sec:1d}

In this section the players are located along a line. We show for any Nash Equilibrium there is at least one group covering a
constant fraction (depending on the risk aversion factor $r$) of the total length
We assume each person has a coverage length of one on both sides. Recall for each person $x$ in a group $S$ the expected
utility is $E[u(x)] = \Big(\frac{M}{|S|}\Big)^{1-r}\cdot \frac{A_S}{A}$ where $M$ is the total money, $A_S$ is the length
covered by group $S$ and $A$ is the total length. We contract the points not covered by any player. Therefore every point in
the total length has at least one person whose coverage length contains it.

\begin{lemma}
\label{lemma:no-one-has-large-exclusive-area} For the line case, let $S$ be a richest group in a Nash Equilibrium. Then there
is no player $i \notin S$ who can add a length of at least $2(1-r)$ to the length $A_S$ currently covered by $S$.
\end{lemma}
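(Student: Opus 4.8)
The plan is to argue by contradiction. Suppose there is a player $i \notin S$ whose marginal contribution to the coverage of $S$ is some $\delta \geq 2(1-r)$; that is, letting $i$ join $S$ would raise the covered length from $A_S$ to $A_S + \delta$. I would show that the \emph{defect} operation in which $i$ leaves its current group to join $S$ is incentivized, contradicting the assumption that the partition is a Nash Equilibrium. Recall that the defect is incentivized precisely when both $i$ and every member of $S$ strictly increase their expected utility, so I must verify these two conditions.

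First I would compare the per-person expected utility of $S$ before and after $i$ joins. Every member of $S$ currently has utility $\left(\frac{M}{|S|}\right)^{1-r}\frac{A_S}{A}$, while after the move each member of $S \cup \{i\}$ has utility $\left(\frac{M}{|S|+1}\right)^{1-r}\frac{A_S+\delta}{A}$. Thus the members of $S$ strictly benefit exactly when
$$\Big(1+\tfrac{\delta}{A_S}\Big) > \Big(1+\tfrac{1}{|S|}\Big)^{1-r}.$$
To bound the left-hand side from below I would use the geometric fact specific to the line that the union of the $|S|$ length-$2$ intervals covering $S$ has length at most $2|S|$, so $A_S \leq 2|S|$; combined with $\delta \geq 2(1-r)$ this gives $\frac{\delta}{A_S}\geq \frac{1-r}{|S|}$. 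The right-hand side I would bound from above by Bernoulli's inequality, which for $0<1-r<1$ and $t=\frac{1}{|S|}>0$ yields the \emph{strict} bound $(1+t)^{1-r} < 1+(1-r)t$. Chaining these estimates gives $1+\frac{\delta}{A_S}\geq 1+\frac{1-r}{|S|} > \big(1+\frac{1}{|S|}\big)^{1-r}$, so every member of $S$ strictly gains.

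It remains to check that $i$ itself strictly benefits, and this is where the hypothesis that $S$ is a \emph{richest} group enters. Since $S$ has the maximum expected utility per person, player $i$'s current utility is at most the old per-person utility of $S$, namely $\left(\frac{M}{|S|}\right)^{1-r}\frac{A_S}{A}$. The displayed inequality shows this quantity is strictly smaller than the post-move utility $\left(\frac{M}{|S|+1}\right)^{1-r}\frac{A_S+\delta}{A}$, which is exactly $i$'s new utility. Hence $i$'s utility strictly increases as well. With both $i$ and all of $S$ strictly better off, the defect is incentivized, contradicting the Nash Equilibrium and completing the argument.

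The main obstacle is the central inequality; everything hinges on pairing the line-specific bound $A_S\leq 2|S|$ with the correct (strict) form of Bernoulli's inequality, and on observing that the threshold $2(1-r)$ is tuned precisely so that $\frac{\delta}{A_S}\geq\frac{1-r}{|S|}$ matches the first-order term $(1-r)/|S|$ in the expansion of $\big(1+\frac{1}{|S|}\big)^{1-r}$. I would also confirm the degenerate cases (for instance $i$ currently alone, or $i$'s group tied with $S$ for richest), but these are absorbed by the inequality $\leq$ in the richest-group bound, since the final step is strict regardless.
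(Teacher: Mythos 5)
Your proposal is correct and follows essentially the same route as the paper: both arguments hinge on the line-specific bound $A_S \leq 2|S|$ combined with (a strict form of) Bernoulli's inequality to show that the joint condition for an incentivized defect would be satisfied, contradicting the Nash Equilibrium; the richest-group hypothesis is used in the same way to handle player $i$'s side of the comparison. The only cosmetic difference is that you apply Bernoulli with exponent $1-r\in(0,1)$ and argue forward that the defect is incentivized, whereas the paper sets $t=\frac{1}{1-r}>1$ and derives a contradiction from the assumed non-incentive inequality — these are equivalent.
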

\begin{proof}
Suppose there is a player $i \notin S$, who can add a length of at least $2(1-r)$ to the length covered by $S$. However, since
it is a Nash Equilibrium, either the new expected utility of $S$ on adding this player is less than or equal to the current
expected utility of $S$ (hence $S$ would have no incentive in adding the player $i$) or the player $i$ would not have any
incentive to move to $S$, as the projected new expected utility of $i$ is less than or equal to his current expected utility.
Since $S$ is a richest group, both these conditions combine to give:
\begin{equation}
\label{eq:1d}
 \Big(\frac{M}{|S|+1}\Big)^{1-r}\cdot \frac{A_S + 2(1-r)}{A} \leq \Big(\frac{M}{|S|}\Big)^{1-r}\cdot \frac{A_S}{A}
\end{equation}
As each player has a coverage length of two we have $2|S| \geq A_S$ (equality holds only if the coverage lengths of the
members of $S$ are pairwise disjoint). The function $f(x)=\frac{x}{x+1}$ is increasing on $(0,\infty)$ and hence
$\frac{|S|}{|S|+1} \geq \frac{\beta}{\beta+1}$ where $\beta = \frac{A_S}{2}$. Combining with Equation~\ref{eq:1d} gives
$$ \Big(\frac{\beta}{\beta+1}\Big)^{1-r} \leq \Big(\frac{|S|}{|S|+1}\Big)^{1-r} \leq \frac{A_S}{A_S + 2(1-r)} $$
Rearranging and setting $1-r=\frac{1}{t}$ implies
$$ \frac{A_S+2(1-r)}{A_S} \leq \Big(\frac{\beta+1}{\beta}\Big)^{\frac{1}{t}} \Longrightarrow \Big(1+ \frac{2(1-r)}{A_S}\Big)^{t} \leq 1+ \frac{1}{\beta}$$
Bernoulli's inequality states if $x,q\in \mathbb{R}$ and $x>-1, q>1$ then $(1+x)^{q}>1+qx$. Applying the inequality for
$x=\frac{2(1-r)}{A_S}$ and $q=t=\frac{1}{1-r}>1$ gives
$$ 1+\frac{1}{\beta} = 1+ \frac{2t(1-r)}{A_S} < \Big(1+ \frac{2(1-r)}{A_S}\Big)^{t} \leq 1+\frac{1}{\beta}$$
which is a contradiction.
\end{proof}

\begin{definition}
Given a partition of the people into groups, we say a group $G$ is a \textbf{\emph{richest}} group if its expected utility per
person value is at least that of any other group in the partition.
\end{definition}

We note that given any partition of people into groups, there always exists at least one richest group. We show the following
interesting and unexpected phenomenon: in every Nash Equilibrium any richest group covers a constant fraction of the total
volume.

\begin{theorem}
For the line case, in any Nash Equilibrium there is always a group which covers a constant fraction of the total length where
the constant is $\frac{1}{1+2(1-r)}$. \label{thm:constant-ratio-for-1d}
\end{theorem}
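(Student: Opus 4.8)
The plan is to take a richest group $S$ and show that the part of the line \emph{not} covered by $S$ is short relative to $A_S$; concretely I will prove $A - A_S \le 2(1-r)\,A_S$, which rearranges to $A_S \ge A/(1+2(1-r))$ and hence exhibits a group covering the claimed fraction. The engine is Lemma~\ref{lemma:no-one-has-large-exclusive-area}: every player $i \notin S$, if added to $S$, contributes new covered length strictly less than $2(1-r)$.

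First I would record the structural consequences of the lemma. Write $\mathrm{cov}(S)$ as a disjoint union of maximal covered intervals $C_1,\ldots,C_k$ ordered along the line; since each $C_j$ contains at least one length-$2$ coverage interval, $A_S = \sum_j |C_j| \ge 2k$. The complement $U = [0,A]\setminus \mathrm{cov}(S)$ is a union of gaps: at most two boundary gaps (before $C_1$ and after $C_k$) together with $k-1$ interior gaps. The key first consequence is that no outside player's length-$2$ interval can be disjoint from $\mathrm{cov}(S)$, since otherwise it would add new length $2 > 2(1-r)$; thus every player covering a point of $U$ must reach into $\mathrm{cov}(S)$.

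Next I would bound the length of each gap. For an interior gap $(a,b)$, the nearest covered region a player can touch is at $\le a$ or $\ge b$, so every player covering this gap is \emph{left-anchored} ($c-1\le a$) or \emph{right-anchored} ($c+1\ge b$), writing its interval as $[c-1,c+1]$. For a left-anchored player the new length it adds equals the part of its interval lying in the gap, namely $(c+1)-a$, so the lemma forces $c+1 < a + 2(1-r)$; symmetrically a right-anchored player reaches no lower than $b-2(1-r)$. Since the covering players together exhaust $(a,b)$, the two reaches must meet, giving $b-a \le 4(1-r)$. For a boundary gap only one anchoring side exists, and the identical computation yields length $\le 2(1-r)$.

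Summing the gaps, $A - A_S = |U| \le 2\cdot 2(1-r) + (k-1)\cdot 4(1-r) = 4(1-r)k \le 2(1-r)A_S$, where the last step uses $A_S \ge 2k$; this is exactly the desired inequality. The main obstacle is the gap-length step: one must justify that the new length charged to a covering player equals the portion of its interval inside that \emph{single} gap (this uses $|C_j|\ge 2$, so one length-$2$ interval cannot straddle the two gaps flanking $C_j$), and one must handle intervals truncated at the two ends of the line, where effective coverage is clipped to $[0,A]$ and a player need not reach $\mathrm{cov}(S)$. I expect these boundary cases to require a short separate check but to perturb the bound only in lower-order terms.
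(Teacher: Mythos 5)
Your proposal is correct and follows essentially the same route as the paper's proof: decompose the line into the maximal covered segments of a richest group $S$, the interior gaps, and the two boundary gaps; use Lemma~\ref{lemma:no-one-has-large-exclusive-area} to bound each interior gap by $4(1-r)$ and each boundary gap by $2(1-r)$; and combine with $A_S\ge 2k$ to get $A\le\bigl(1+2(1-r)\bigr)A_S$. Your anchored-reach argument for the interior gaps is a cosmetic variant of the paper's midpoint argument, and the boundary issue you flag is handled exactly as you anticipate (after contraction the leftmost point of the space is the left endpoint of some player's full length-$2$ interval, so no truncation actually occurs).
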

\begin{proof}
We refer to Figure~\ref{fig:constant} for the notation used in this proof. Given a Nash Equilibrium, consider any richest
group $S$. Since we contracted the points not covered by anybody, a player $x\notin S$ has coverage length starting from the
leftmost point of $E_L$ or $E_L$ might be empty. If $E_L$ is not empty then $|E_L|\leq 2(1-r)$ else $x$ contradicts
Lemma~\ref{lemma:no-one-has-large-exclusive-area}. In either case $|E_L|\leq 2(1-r)$. Similarly there is a player $y\notin S$
whose coverage length ends at the rightmost point of $E_R$ or $E_R$ is empty. In either case $|E_R|\leq 2(1-r)$. Claim is
$|I_j|\leq 4(1-r)$ for $1\leq j\leq m-1$. Suppose $\exists\ j$ such that $|I_j| > 4(1-r)$. The midpoint of $I_j$ must be
covered by a player $z\notin S$ and so this $z$ covers at least half of $I_j$, i.e., $z$ can offer greater than $2(1-r)$
length to $S$ which contradicts Lemma~\ref{lemma:no-one-has-large-exclusive-area}. Also $|S_j|\geq 2$ for all $1\leq j\leq m$
as each $S_j$ is a concatenation of the coverage lengths of one or more members of $S$. Therefore $|A_S| = \sum_{j=1}^{m}
|S_j| \geq 2m$ and
\begin{align*}
 |A| &= \sum_{j=1}^{m}|S_j| + \sum_{j=1}^{m-1}|I_j| + |E_L| + |E_R| \\
 &\leq \sum_{j=1}^{m}|S_j| + \Big(4(m-1)(1-r)\Big) + 2(1-r) + 2(1-r) \\
 &\leq \sum_{j=1}^{m}|S_j| + 4(1-r)m \\
 &\leq \sum_{j=1}^{m}|S_j| + 2(1-r)\sum_{j=1}^{m}|S_j| \\
 &= \Big(1+2(1-r)\Big)\Big(\sum_{j=1}^{m} |S_j|\Big)
\end{align*}
Thus $\dfrac{|A_S|}{|A|} = \dfrac{\sum_{j=1}^{m} |S_j|}{|A|}\geq \dfrac{1}{1+2(1-r)}$, i.e., the group $S$ covers a constant
fraction of the total length.
\end{proof}

\vspace{10mm}

\begin{figure}[h]
\centering
\includegraphics[width=4in]{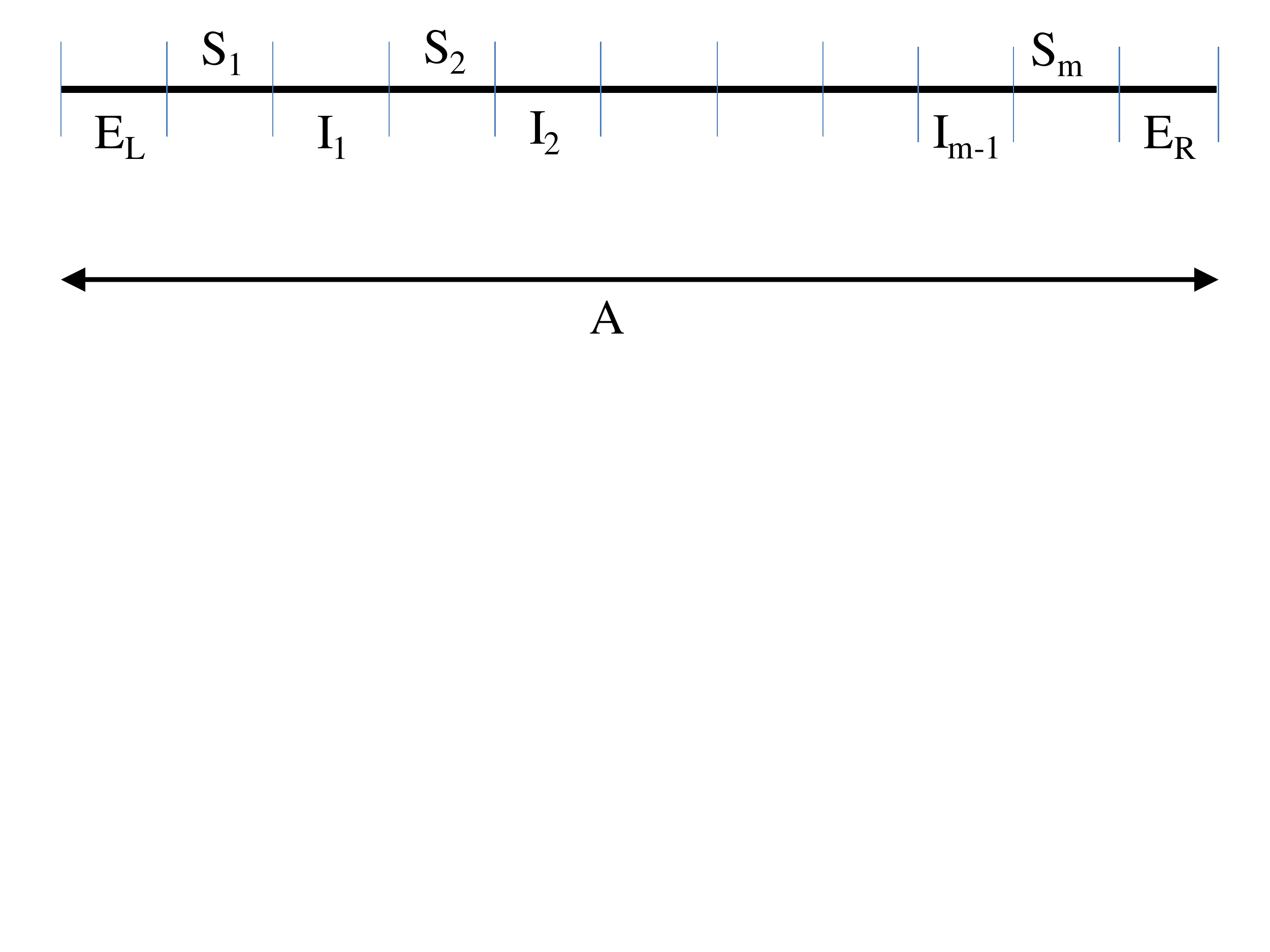}
\vspace{-40mm}
\caption
{
The segments covered by $S$ are $S_1,S_2,\ldots,S_m$.
The internal gaps are $I_1,I_2,\ldots,I_{m-1}$. The left and right
external gaps are $E_L$ and $E_R$ respectively. The total length $A$ stretches from the left endpoint of $E_L$ to the right endpoint of $E_R$
\label{fig:constant}
}
\end{figure}

\section{The Euclidean $d$-dimensional Case}
\label{sec:euclidean-d}

In this section we consider the case in which the players are located in a Euclidean $d$-dimensional space and each person
covers a unit ball around himself. The next lemma bounds the ratio of volumes of the union of the two families of balls with
the same set of centers but different radii.
\begin{lemma}
\label{lem:bound-on-union} Let $A$ be a finite family of balls of radius one in a Euclidean $d$-dimensional space. Let $B$ be
a family of balls with the same set of the centers but radius $t\geq 1$. Let $A_U,B_U$ denote the union of balls in $A$ and
$B$ respectively. Then Vol$(B_U)\leq t^d\cdot$ Vol$(A_U)$ where Vol$(A_U),$Vol$(B_U)$ denotes the volume of $A_U$ and $B_U$
respectively.
\end{lemma}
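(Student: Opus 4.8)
The plan is to build an explicit injective, volume-contracting map $\psi$ from $B_U$ into $A_U$ whose local volume factor is exactly $t^{-d}$; this immediately yields $\mathrm{Vol}(A_U)\ge t^{-d}\,\mathrm{Vol}(B_U)$, which is the claim after rearranging. Write $c_1,\dots,c_n$ for the common centers, so that $A_U=\bigcup_i \overline{B(c_i,1)}$ and $B_U=\bigcup_i \overline{B(c_i,t)}$. First I would partition space by the Voronoi diagram of $\{c_i\}$: let $V_i$ be the Voronoi cell of $c_i$, i.e.\ the set of points at least as close to $c_i$ as to any other center. These cells are closed convex polytopes with pairwise disjoint interiors that cover $\R^d$, and each $c_i\in V_i$.

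Next I would define the contraction toward the nearest site: for $x\in V_i$, set $\psi(x)=c_i+\tfrac{1}{t}(x-c_i)$. Three properties drive the argument. (i) $\psi$ maps $B_U$ into $A_U$: if $x\in B_U$ then its nearest center $c_i$ satisfies $|x-c_i|\le t$, so $|\psi(x)-c_i|=\tfrac{1}{t}|x-c_i|\le 1$ and hence $\psi(x)\in\overline{B(c_i,1)}\subseteq A_U$. (ii) $\psi(V_i)\subseteq V_i$: since $t\ge 1$ the point $\psi(x)$ lies on the segment $[c_i,x]$, and as $V_i$ is convex and contains both $c_i$ and $x$, it contains $\psi(x)$. (iii) On $V_i$ the map $\psi$ is a homothety of ratio $1/t$, so it scales $d$-dimensional volume by exactly $t^{-d}$.

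Finally I would assemble these. Because $\psi(B_U\cap V_i)\subseteq V_i$ and the $V_i$ have pairwise disjoint interiors, the images $\psi(B_U\cap V_i)$ also have pairwise disjoint interiors and all lie inside $A_U$, so
\[
\mathrm{Vol}(A_U)\ \ge\ \sum_i \mathrm{Vol}\big(\psi(B_U\cap V_i)\big)\ =\ \sum_i t^{-d}\,\mathrm{Vol}(B_U\cap V_i)\ =\ t^{-d}\,\mathrm{Vol}(B_U),
\]
where the last equality uses that the $\{V_i\}$ tile $B_U$ up to a measure-zero boundary. Rearranging gives $\mathrm{Vol}(B_U)\le t^d\,\mathrm{Vol}(A_U)$.

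The step I expect to be the main obstacle is the injectivity (equivalently, the disjointness of the images) of $\psi$: a priori a contraction toward different sites could collapse points from neighbouring cells onto the same image. The clean resolution is property (ii) above — contracting a convex Voronoi cell toward its own site keeps the image inside that very cell, so images originating in distinct cells are automatically separated. The remaining care is routine: one checks that the nearest center exists and is almost-everywhere unique (the family is finite and the balls are closed), and handles the cell boundaries, a set of measure zero, so that the disjointness and the tiling identity used above are valid at the level of volumes.
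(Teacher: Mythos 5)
Your proof is correct and is essentially the paper's argument: both partition $B_U$ according to the nearest center and contract each piece toward its center by a homothety of ratio $1/t$, landing inside $A_U$ with essentially disjoint images. The only difference is cosmetic --- you invoke convexity of the Voronoi cells to get star-shapedness and disjointness of the images for free (modulo measure-zero cell boundaries), whereas the paper breaks ties by index and verifies both properties by direct triangle-inequality computations.
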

\begin{proof}
Let $C=\{c_1,c_2,\ldots,c_n\}$ be the centers of the balls in $A$. For $x\in B_U$ define c$(x) = $min$\{\ d(c_j,x)\ |\ 1\leq
j\leq n\ \}$. Consider the partition $P_1,P_2,\ldots,P_n$ of $B_U$ into $n$ parts: $x\in B_U$ is placed in $P_i$ if and only
if $i = $ min$\{\ j\ |\ d(x,c_j) = c(x)\ \}$.

We claim $y\in P_j$ implies $[c_j,y]\in P_j$. Suppose there is a point $z\in [c_j,y]$ such that $z\in P_k$ for $k\neq j$. By
the triangle inequality $d(c_k,y)\leq d(c_k,z)+d(z,y)\leq d(c_j,z)+d(z,y) = d(c_j,y)$ where we used $z\in P_k$ implies
$d(c_k,z)\leq d(c_j,z)$. So $d(c_k,y)\leq d(c_j,y)$. But $y\in P_j$ implies $d(c_k,y) = d(c_j,y),\ d(c_k,z) = d(c_j,z)$ and
$j<k$. This contradicts the membership of $z$ in $P_k$. So we can apply homothecy: for every $1\leq i\leq n$ we contract each
$P_i$ w.r.t point $c_i$ by a factor of $\frac{1}{t}$ to get a region say $P'_i$. We note $P'_i\subseteq A_U$ as $x\in P_i$
implies $d(x,c_i)\leq t$ and if we denote by $x'$ the point to which $x$ is mapped under the contraction, then $d(x',c_i) =
\frac{1}{t}\cdot d(x,c_i) \leq \frac{1}{t}\cdot t = 1$.

The next claim is $P'_i\cap P'_j = \emptyset$ for any $i\neq j$. Suppose not and say $y\in P'_i\cap P'_j$. Let $y_i,y_j$ be
the points in $P_i$ and $P_j$ respectively which get mapped to $y$ under the contraction. Let $d(c_j,y)=\alpha$ and
$d(c_i,y)=\beta$. By the triangle inequality we have $d(c_i,y_j) \leq d(c_i,y) + d(y,y_j) = \beta +(t-1)\alpha$. Also $y_j\in
P_j$ implies $d(c_i,y_j)\geq d(c_j,y_j) = t\alpha$. So $\beta +(t-1)\alpha \geq d(c_i,y_j)\geq d(c_j,y_j)\geq t\alpha$, i.e.,
$\beta \geq \alpha$. Similarly we have $\alpha \geq \beta$ which implies $\alpha = \beta$. Therefore $t\alpha = d(c_j,y_j)
\leq d(c_i,y_j) \leq d(c_i,y) + d(y,y_j) = \beta +(t-1)\alpha = t\alpha$. Equality in the triangle inequality gives
$c_i,y_j,y_i$ are on the same line and $d(c_i,y_j)=d(c_i,y_i)$ which implies $y_i=y_j$ which is a contradiction. So we have
the following two conditions :
\begin{enumerate}
\item $P'_i\subseteq A_U$ for every $1\leq i\leq n$.
\item $P'_i\cap P'_j = \emptyset$ for any $i\neq j$.
\end{enumerate}
Therefore, Vol$(A_U) \geq \sum_{i=1}^{n}$ Vol$(P'_i) = \frac{1}{t^d}\cdot \sum_{i=1}^{n}$ Vol$(P_i)= \frac{1}{t^d}\cdot$
Vol$(B_U)$. We note the bound is tight when all the balls in $B$ are disjoint.
\end{proof}

We now prove a generalization of Lemma~\ref{lemma:no-one-has-large-exclusive-area} for a Euclidean $d$-dimensional space. Let
$V_d$ denote the volume of a unit ball in the Euclidean $d$-dimensional space.

\begin{lemma}
Let $S$ be a richest group in a Nash Equilibrium. There is no player $i \notin S$ who can add a volume of at least $(1-r)V_d$
to the volume $A_S$ currently covered by $S$. \label{lem:no-one-large-d-exlusive-area}
\end{lemma}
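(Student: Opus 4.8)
The plan is to mirror the structure of the one-dimensional argument (Lemma~\ref{lemma:no-one-has-large-exclusive-area}) but to replace the elementary algebra there with the volume-comparison tool just established in Lemma~\ref{lem:bound-on-union}. First I would argue by contradiction: suppose some player $i\notin S$ can add a volume of at least $(1-r)V_d$ to $A_S$. Since we are at a Nash Equilibrium, adding $i$ to $S$ is not incentivized, so either the new per-person utility of $S$ does not exceed the current utility, or $i$'s projected utility does not exceed its current utility. Because $S$ is a \emph{richest} group, $i$'s current utility is at most that of $S$, so in either case I get the key inequality
\begin{equation*}
\Big(\frac{M}{|S|+1}\Big)^{1-r}\cdot \frac{A_S + (1-r)V_d}{A} \leq \Big(\frac{M}{|S|}\Big)^{1-r}\cdot \frac{A_S}{A},
\end{equation*}
which after cancelling $M/A$ becomes $\big(\tfrac{|S|}{|S|+1}\big)^{1-r}\le \tfrac{A_S}{A_S+(1-r)V_d}$.

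The next step is to bound $|S|$ from below in terms of $A_S$, which is exactly where Lemma~\ref{lem:bound-on-union} enters and where the argument genuinely generalizes the line case. In one dimension the relation $2|S|\ge A_S$ was immediate; in $d$ dimensions I would instead observe that $S$ consists of $|S|$ unit balls whose union has volume $A_S$, so trivially $A_S\le |S|V_d$, giving $|S|\ge A_S/V_d$. Setting $\beta = A_S/V_d$ and using that $f(x)=x/(x+1)$ is increasing yields $\big(\tfrac{\beta}{\beta+1}\big)^{1-r}\le \big(\tfrac{|S|}{|S|+1}\big)^{1-r}\le \tfrac{A_S}{A_S+(1-r)V_d}$. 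Then, exactly as in the line case, I would raise both sides to the power $t=\tfrac{1}{1-r}>1$ and apply Bernoulli's inequality with $x=\tfrac{(1-r)V_d}{A_S}$ and $q=t$ to obtain $1+\tfrac{1}{\beta} = 1+\tfrac{t(1-r)V_d}{A_S} < \big(1+\tfrac{(1-r)V_d}{A_S}\big)^{t}\le 1+\tfrac{1}{\beta}$, a contradiction.

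I expect the main obstacle to be pinning down the correct lower bound on $|S|$. Whether the clean bound $A_S\le |S|V_d$ suffices, or whether Lemma~\ref{lem:bound-on-union} is really needed to control how the volume scales when we compare the radius-one configuration to a larger radius, depends on exactly how the threshold $(1-r)V_d$ interacts with overlaps among the balls in $S$. The subtlety is that in higher dimensions the union of unit balls can be much smaller than the sum of their volumes, so the inequality $|S|\ge \beta$ must be justified carefully rather than taken as the tight one-dimensional identity; Lemma~\ref{lem:bound-on-union} is the natural instrument for converting between union-volumes at different radii should a more refined estimate be required. Once the relation between $|S|$ and $A_S$ is fixed, the Bernoulli-inequality endgame is identical in spirit to the line case and should go through verbatim with $2(1-r)$ replaced by $(1-r)V_d$.
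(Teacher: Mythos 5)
Your proof is correct and follows essentially the same route as the paper's: the same Nash-equilibrium inequality, the same subadditivity bound $A_S\le |S|V_d$ giving $|S|\ge\beta=A_S/V_d$, and the same Bernoulli-inequality contradiction. Your closing worry is unfounded — since $A_S$ is the volume of a \emph{union} of $|S|$ unit balls, $A_S\le |S|V_d$ holds trivially regardless of overlaps (overlaps only help), and the paper likewise uses only this bound here; Lemma~\ref{lem:bound-on-union} is not needed for this lemma and is reserved for Theorems~\ref{thm:atleast-3^d} and~\ref{thm:better-than-3^d}.
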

\begin{proof}
Suppose there is a player $i \notin S$, who can add a volume of at least $(1-r)V_d$ to the volume covered by $S$. However,
since it is a Nash Equilibrium, either the new expected utility of $S$ on adding this player is less than or equal to the
current expected utility of $S$, hence $S$ would have no incentive in adding the player $i$. Or else the player $i$ would not
have any incentive to move to $S$, as the projected new expected utility of $i$ is less than or equal to his current expected
utility. Since $S$ is a richest group, both these conditions combine to give:
\begin{equation}
 \Big(\frac{M}{|S|+1}\Big)^{1-r}\cdot \frac{A_S + (1-r)V_d}{A} \leq \Big(\frac{M}{|S|}\Big)^{1-r}\cdot \frac{A_S}{A}
\label{eq:d-dimension}
\end{equation}
As each player has a coverage volume of $V_d$ we have $|S|V_d \geq A_S$ (with equality only if the coverage volumes of the
members of $S$ are pairwise disjoint). The function $f(x)=\frac{x}{x+1}$ is increasing on $(0,\infty)$ and hence
$\frac{|S|}{|S|+1} \geq \frac{\beta}{\beta+1}$ where $\beta = \frac{A_S}{V_d}$. Combining with Equation~\ref{eq:d-dimension}
gives
$$ \Big(\frac{\beta}{\beta+1}\Big)^{1-r} \leq \Big(\frac{|S|}{|S|+1}\Big)^{1-r} \leq \frac{A_S}{A_S + (1-r)V_d} $$
Rearranging and setting $1-r=\frac{1}{t}$ implies
$$ \frac{A_S+(1-r)V_d}{A_S} \leq \Big(\frac{\beta+1}{\beta}\Big)^{\frac{1}{t}} \Longrightarrow \Big(1+ \frac{(1-r)V_d}{A_S}\Big)^{t} \leq 1+ \frac{1}{\beta}$$
Bernoulli's inequality states if $x,q\in \mathbb{R}$ and $x>-1, q>1$ then $(1+x)^{q}>1+qx$. Applying the inequality for
$x=\frac{(1-r)V_d}{A_S}$ and $q=t=\frac{1}{1-r}>1$ gives
$$ 1+\frac{1}{\beta} = 1+ \frac{t(1-r)V_d}{A_S} < \Big(1+ \frac{(1-r)V_d}{A_S}\Big)^{t} \leq 1+ \frac{1}{\beta}$$
which is a contradiction.
\end{proof}

We first give a simple proof that if $S$ is any richest group in a Nash Equilibrium then $\frac{A_S}{A}\geq \frac{1}{3^d}$
where $A_S$ is the volume covered by the group $S$ and $A$ is the total volume.

\begin{theorem}
\label{thm:atleast-3^d} If the players are located in a $d$-dimensional Euclidean space then in any Nash Equilibrium every
richest group covers at least a $\frac{1}{3^d}$-fraction of the total volume.
\end{theorem}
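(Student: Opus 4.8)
The plan is to show that inflating every unit ball of the richest group $S$ from radius $1$ to radius $3$ already produces a union covering the whole space, and then to bound the volume of this inflated union by $3^d A_S$ using Lemma~\ref{lem:bound-on-union}. Throughout, write $c_x$ for the center of the unit ball of player $x$.

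First I would extract a purely geometric consequence from Lemma~\ref{lem:no-one-large-d-exlusive-area}. For any player $i \notin S$, that lemma says the volume $i$ would add to $A_S$ is strictly less than $(1-r)V_d$. Since the ball $B_i$ of player $i$ has total volume $V_d$, the already-covered part satisfies $\mathrm{Vol}(B_i \cap A_S) > V_d - (1-r)V_d = rV_d > 0$, using $0 < r < 1$. In particular $B_i \cap A_S \neq \emptyset$, so $B_i$ meets the unit ball of some member $s \in S$. By the triangle inequality their centers satisfy $d(c_i, c_s) \le 2$; that is, every center outside $S$ lies within distance $2$ of some center in $S$.

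Next I would verify the covering claim. Every point $p$ of the total volume lies in the unit ball of some player $q$, so $d(p, c_q) \le 1$. If $q \in S$ then $p$ is within distance $1$ of an $S$-center. If $q \notin S$, then by the previous step $c_q$ is within distance $2$ of some $S$-center $c_s$, whence $d(p, c_s) \le d(p, c_q) + d(c_q, c_s) \le 1 + 2 = 3$. In either case $p$ lies in the ball of radius $3$ centered at a member of $S$, so the union of the radius-$3$ balls centered at the members of $S$ contains all of $A$.

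Finally I would apply Lemma~\ref{lem:bound-on-union} with the unit balls centered at the members of $S$ (whose union has volume $A_S$) in the role of the family $A$, the radius-$3$ balls on the same centers in the role of $B$, and $t = 3$. This gives that the inflated union has volume at most $3^d A_S$; since it contains the entire space, $A \le 3^d A_S$, i.e.\ $\frac{A_S}{A} \ge \frac{1}{3^d}$, as claimed. The main obstacle is the very first step: reading Lemma~\ref{lem:no-one-large-d-exlusive-area} correctly as forcing every outside ball to overlap $A_S$ in \emph{positive} volume, since it is this overlap that converts the equilibrium/utility condition into the clean distance bound that powers the covering argument.
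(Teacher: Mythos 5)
Your proposal is correct and follows essentially the same route as the paper: use Lemma~\ref{lem:no-one-large-d-exlusive-area} to force every outside player's center within distance $2$ of some member of $S$, conclude that radius-$3$ balls centered at $S$ cover everything, and finish with Lemma~\ref{lem:bound-on-union} at $t=3$. Your phrasing of the first step via $\mathrm{Vol}(B_i \cap A_S) > rV_d > 0$ is just the contrapositive of the paper's disjointness argument, so there is no substantive difference.
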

\begin{proof}
Let $S$ be a richest group in a Nash Equilibrium and consider a player $x\notin S$. There must exist a member of $s\in S$ such
that the distance between $x$ and $s$ is at most two. Otherwise the coverage ball of $x$ is disjoint from the coverage ball of
$S$ and thus can contribute a volume of $V_d > (1-r)V_d$ which contradicts Lemma~\ref{lem:no-one-large-d-exlusive-area}. So
the total volume $A$ is covered by the volume $A'_{S}$ of the union of the family of balls of radius three centered at the
members of $S$. By Lemma~\ref{lem:bound-on-union} we have $\dfrac{A_S}{A}\geq \dfrac{A_S}{A'_S}\geq \dfrac{1}{3^d}$.
\end{proof}

The above bound of $\frac{1}{3^d}$ is independent of the risk aversion factor $r$. We now give a better bound which depends on
$r$. To this end, in the following lemma, we show how to bound the volume of intersection of two unit balls in a Euclidean
$d$-dimensional space in terms of the distance between their centers.

\begin{lemma}
Let $B_1$ and $B_2$ be two unit balls in a Euclidean $d-$dimensional space. For $a\leq 1$, if the distance between the centers
of $B_1$ and $B_2$ is 2$a$, then the volume of intersection of $B_1$ and $B_2$ is at most $2(1-a^2)^{\frac{d-1}{2}}\cdot
V_{d-1}$ where $V_{d-1}$ is the volume of a unit ball in a Euclidean $(d-1)$-dimensional space. \label{lem:bound-intersection}
\end{lemma}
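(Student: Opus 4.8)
The plan is to exhibit an explicit cylinder that contains the intersection and whose volume is trivial to compute. First I would choose coordinates so that the centers of $B_1$ and $B_2$ lie on the first coordinate axis at the points $(-a,0,\ldots,0)$ and $(a,0,\ldots,0)$, so that their distance is $2a$ as required. A point $x=(x_1,\ldots,x_d)$ then lies in $B_1\cap B_2$ exactly when both $(x_1+a)^2+\sum_{i\ge 2}x_i^2\le 1$ and $(x_1-a)^2+\sum_{i\ge 2}x_i^2\le 1$ hold.

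Next I would slice the intersection by the hyperplanes $x_1=s$ perpendicular to the axis joining the centers. The slice of $B_1$ at height $s$ is a $(d-1)$-ball of radius $\sqrt{1-(s+a)^2}$ and the slice of $B_2$ is a concentric $(d-1)$-ball of radius $\sqrt{1-(s-a)^2}$; hence the slice of the intersection is the $(d-1)$-ball whose radius is the smaller of the two, namely $\sqrt{1-(|s|+a)^2}$. This radius is a decreasing function of $|s|$, so it attains its maximum value $\sqrt{1-a^2}$ at $s=0$, and the slice is nonempty precisely when $|s|+a\le 1$, i.e. for $s\in[-(1-a),\,1-a]$.

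The two observations above show that $B_1\cap B_2$ is contained in the cylinder $D\times[-(1-a),\,1-a]$, where $D$ is the $(d-1)$-ball of radius $\sqrt{1-a^2}$ centered on the axis. The volume of this cylinder is $V_{d-1}(1-a^2)^{\frac{d-1}{2}}\cdot 2(1-a)$, and since $a\ge 0$ gives $1-a\le 1$ we conclude
\[
\mathrm{Vol}(B_1\cap B_2)\ \le\ 2(1-a)\,(1-a^2)^{\frac{d-1}{2}}\,V_{d-1}\ \le\ 2(1-a^2)^{\frac{d-1}{2}}\,V_{d-1},
\]
which is exactly the claimed bound.

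There is essentially no hard step here; the only points that require care are verifying that the intersection of the two slices is really the smaller $(d-1)$-ball (so that the containment in the cylinder is valid) and that the cross-sectional radius is monotone in $|s|$, both of which follow from the elementary fact that $(s+a)^2\ge(s-a)^2$ for $s\ge 0$. If one prefers a sharper route that avoids the wasteful containment, the same slicing yields $\mathrm{Vol}(B_1\cap B_2)=2V_{d-1}\int_a^1(1-u^2)^{\frac{d-1}{2}}\,du$, and bounding the decreasing integrand by its value $(1-a^2)^{\frac{d-1}{2}}$ at the left endpoint times the interval length $1-a$ gives the identical estimate.
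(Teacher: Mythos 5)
Your proof is correct and uses the same idea as the paper: enclose $B_1\cap B_2$ in a cylinder of radius $\sqrt{1-a^2}$ about the axis through the centers and bound the height by $2$. In fact your version is the more careful one --- the paper asserts the intersection lies in a cylinder of height $2a$ centered at the midpoint, which as a \emph{containment} claim fails whenever $a<1/2$ (the lens extends over $|x_1|\le 1-a$), whereas your height $2(1-a)$ is correct; since both $2a$ and $2(1-a)$ are at most $2$, the stated bound survives either way.
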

\begin{proof}
We assume $a\leq 1$ otherwise the area of intersection is clearly zero. Let $c_1,c_2$ be the centers of $B_1$ and $B_2$
respectively. Then $B_1\cap B_2$ is contained in the cylinder (say $B$) of radius $\sqrt{1-a^2}$ and height $2a$ centered at
the midpoint of the segment joining $c_1$ and $c_2$. So Vol$(B_1\cap B_2)\leq $ Vol$(B) = 2a(1-a^2)^{\frac{d-1}{2}}V_{d-1}
\leq 2(1-a^2)^{\frac{d-1}{2}}\cdot V_{d-1}$ as $a \leq 1$.
\end{proof}

The next lemma gives a lower bound on the ratio of the volumes of unit balls in Euclidean spaces of consecutive dimensions.

\begin{lemma}
For $d\geq 2$ let $V_{d-1},V_d$ be the volumes of unit balls in the $d-1$ and $d-$dimensional Euclidean spaces respectively.
Then $\frac{V_d}{V_{d-1}}\geq \frac{1}{d}$. \label{lem:consecutive-dimension}
\end{lemma}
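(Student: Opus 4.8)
The plan is to avoid computing the ratio $V_d/V_{d-1}$ exactly (it is a Beta-function value with no elementary closed form) and instead to lower bound $V_d$ by the volume of a simple body inscribed in the unit ball whose volume is a clean multiple of $V_{d-1}$. The natural candidate is the \emph{bicone} (double cone) spanned by the equatorial $(d-1)$-ball and the two poles of the unit ball.

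Concretely, let $B^d\subseteq\R^d$ be the unit ball, let $D=\{x\in B^d : x_1=0\}$ be the equatorial $(d-1)$-dimensional unit ball (so its $(d-1)$-volume is $V_{d-1}$), and let $p_{\pm}=(\pm 1,0,\dots,0)$ be the two poles. Let $K$ be the union of the two cones with base $D$ and apexes $p_+$ and $p_-$. First I would verify the containment $K\subseteq B^d$: a point of the upper cone at height $x_1=x\in[0,1]$ has the form $(x,(1-x)v)$ with $v$ in the $(d-1)$-dimensional unit ball, and
\[
 \big|(x,(1-x)v)\big|^2 = x^2+(1-x)^2|v|^2 \leq x^2+(1-x)^2 = 1-2x(1-x)\leq 1,
\]
so the point lies in $B^d$; the lower cone is symmetric. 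Next I would compute $\mathrm{Vol}(K)$ from the standard cone-volume formula: a $d$-dimensional cone over a $(d-1)$-dimensional base of volume $V_{d-1}$ and height $1$ has volume $\int_0^1 V_{d-1}\,h^{d-1}\,dh = V_{d-1}/d$, so the bicone has volume $\mathrm{Vol}(K)=2V_{d-1}/d$. Finally, $V_d=\mathrm{Vol}(B^d)\geq \mathrm{Vol}(K)=\tfrac{2}{d}V_{d-1}\geq \tfrac{1}{d}V_{d-1}$, which is the claim (with room to spare, since we in fact obtain $V_d/V_{d-1}\geq 2/d$).

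I do not expect a genuine obstacle; the content is entirely in choosing the right inscribed body. The one point needing care is the containment inequality, which reduces to the elementary fact $x^2+(1-x)^2\leq 1$ on $[0,1]$ (equivalently, the bicone's cross-sectional radius $1-|x|$ never exceeds the ball's radius $\sqrt{1-x^2}$); everything else is the one-line power integral defining the cone volume. An alternative route reaches the same conclusion without exhibiting a body: slicing $B^d$ by the hyperplanes $x_1=x$ gives $V_d/V_{d-1}=\int_{-1}^{1}(1-x^2)^{(d-1)/2}\,dx$, and bounding the integrand from below via $1-x^2\geq 1-|x|$ reduces this to an elementary integral equal to $4/(d+1)\geq 1/d$; but the bicone argument is cleaner and more transparent, so I would present that.
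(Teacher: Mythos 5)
Your proof is correct, and it takes a genuinely different route from the paper's. The paper argues by induction on $d$: it invokes the recurrence $V_n = \frac{2\pi}{n}V_{n-2}$, writes $\frac{V_n}{V_{n-1}} = \frac{n-1}{n}\cdot\frac{V_{n-2}}{V_{n-3}}$, and applies the inductive hypothesis to the ratio two dimensions down (with base case $V_2/V_1 = \pi/2$). Your argument is instead a direct geometric one: inscribe the bicone over the equatorial $(d-1)$-ball with apexes at the poles, check containment via $x^2+(1-x)^2\le 1$, and read off $V_d \ge \frac{2}{d}V_{d-1}$ from the cone-volume integral. What your approach buys: it is self-contained (no appeal to the "well-known" recurrence, which the paper does not prove), it yields the strictly stronger constant $2/d$ rather than $1/d$ (and $4/(d+1)$ via your slicing variant), and it sidesteps a small wrinkle in the paper's induction, namely that the step for $n=3$ bottoms out at $V_1/V_0$ and so really needs a second base case or a separate check, which the paper glosses over. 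What the paper's approach buys is brevity for a reader who already has the recurrence in hand, and it connects the ratio across dimensions in a way that would also give the matching upper bound $V_d/V_{d-1} = O(1/\sqrt{d})$ with more care -- but for the one-sided inequality actually needed here, your inscribed-body argument is cleaner and I would consider it an improvement.
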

\begin{proof}
We prove by induction on $d$. Base case is d=2 and $\frac{V_2}{V_1} = \frac{\pi}{2}\geq \frac{1}{2}$. We use the well-known
recurrence relation for $V_n$: $V_n = \frac{2\pi}{n}\cdot V_{n-2}$. Suppose the hypothesis is true for all $k\leq n-1$. Then
we have $\frac{V_n}{V_{n-1}} = \frac{n-1}{n} \cdot \frac{V_{n-2}}{V_{n-3}}\geq \frac{n-1}{n} \cdot \frac{1}{n-2} >
\frac{1}{n}$ and so the hypothesis holds true for all $d\geq 2$.
\end{proof}

We are now ready to give a better bound than $\frac{1}{3^d}$ on the fraction of the total volume covered by any richest group
in a Nash Equilibrium.

\begin{theorem}
If the players are located in a $d$-dimensional Euclidean space, then in any Nash Equilibrium there always is a group which
covers at least a $\frac{1}{(2\delta+1)^d}$-fraction of the total volume where $\delta =
\sqrt{1-(\frac{r}{2d})^{\frac{2}{d-1}}}$. We note $\frac{1}{(2\delta+1)^d} > \frac{1}{3^d}$  as $\delta < 1$ and therefore
this improves on Theorem~\ref{thm:atleast-3^d}. \label{thm:better-than-3^d}
\end{theorem}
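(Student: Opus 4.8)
The plan is to run exactly the covering argument of Theorem~\ref{thm:atleast-3^d}, but with the crude radius $3$ replaced by the sharper radius $2\delta+1$. Fix a richest group $S$ in the given Nash Equilibrium. The whole proof reduces to one geometric claim: every player $x\notin S$ lies within distance $2\delta$ of some member of $S$. Granting the claim, each point $p$ of the total volume $A$ is covered by some player's unit ball, that player sits within $2\delta$ of $S$, and so $p$ lies within distance $2\delta+1$ of a member of $S$. Hence $A$ is contained in the union of the balls of radius $2\delta+1$ centred at the members of $S$; applying Lemma~\ref{lem:bound-on-union} with $t=2\delta+1\geq 1$ then yields $\frac{A_S}{A}\geq\frac{1}{(2\delta+1)^d}$. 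Because $\delta<1$ forces $2\delta+1<3$, this is strictly better than Theorem~\ref{thm:atleast-3^d}.

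To establish the distance claim I would argue by contradiction against Lemma~\ref{lem:no-one-large-d-exlusive-area}. Suppose some $x\notin S$ has distance greater than $2\delta$ to every member of $S$, and let $s_0\in S$ be a nearest member, at distance $2a_0>2\delta$. If $a_0\geq 1$ the unit ball of $x$ is disjoint from the coverage of $S$, so $x$ would add the full volume $V_d>(1-r)V_d$, contradicting the lemma just as in Theorem~\ref{thm:atleast-3^d}. Otherwise $\delta<a_0<1$, and I would estimate the volume $x$ adds to $A_S$ from below by $V_d$ minus the volume of overlap between $x$'s unit ball and the coverage of $S$, bounding that overlap, via the nearest ball, by Lemma~\ref{lem:bound-intersection}, namely by $2(1-a_0^2)^{\frac{d-1}{2}}V_{d-1}$.

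The value of $\delta$ is calibrated precisely so this estimate closes. Unwinding $\delta=\sqrt{1-(\frac{r}{2d})^{\frac{2}{d-1}}}$ gives the identity $2(1-\delta^2)^{\frac{d-1}{2}}=\frac{r}{d}$. Since $a\mapsto 2(1-a^2)^{\frac{d-1}{2}}$ is decreasing on $(0,1)$ and $a_0>\delta$, the overlap is strictly less than $\frac{r}{d}V_{d-1}$, and Lemma~\ref{lem:consecutive-dimension} ($\frac{V_d}{V_{d-1}}\geq\frac 1d$) upgrades this to $\frac{r}{d}V_{d-1}\leq rV_d$. Therefore $x$ would add strictly more than $V_d-rV_d=(1-r)V_d$ to $A_S$, contradicting Lemma~\ref{lem:no-one-large-d-exlusive-area}; this forces $d(x,s_0)\leq 2\delta$ and proves the claim.

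The step I expect to be the genuine obstacle is the lower bound on the marginal volume that $x$ contributes. The exact quantity is $V_d$ minus the volume of $x$'s unit ball that already lies inside $A_S$, and $x$'s ball can be clipped simultaneously by several members of $S$ whose distance to $x$ lies in $(2\delta,2)$; replacing that combined overlap by the single nearest-ball intersection of Lemma~\ref{lem:bound-intersection} is exactly the point that needs care. Making this reduction legitimate---arguing that the nearest member controls the relevant overlap, or otherwise bounding the total clipped volume by $rV_d$---is where the real work sits, and it is this single-ball intersection estimate, matched against $2(1-\delta^2)^{\frac{d-1}{2}}=\frac{r}{d}$, that pins down the stated constant $\delta$.
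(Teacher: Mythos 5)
Your proposal is, step for step, the paper's own proof: the same reduction to the claim that every $x\notin S$ lies within $2\delta$ of some member of $S$, the same use of Lemma~\ref{lem:bound-intersection} and Lemma~\ref{lem:consecutive-dimension} to calibrate $\delta$ via $2(1-\delta^2)^{\frac{d-1}{2}}=\frac{r}{d}$, and the same final application of Lemma~\ref{lem:bound-on-union} with radius $2\delta+1$. So in terms of approach there is nothing to distinguish the two.

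The one substantive remark is about the step you single out as ``where the real work sits'': you are right that it is the crux, and you should know that the paper does not do that work either. The paper passes from ``$x$ cannot add $(1-r)V_d$ to $A_S$'' to ``there is a \emph{single} $s\in S$ with $\mathrm{Vol}(B_x\cap B_s)\geq rV_d$'' with an ``i.e.'', but what actually follows is only $\mathrm{Vol}\bigl(B_x\cap\bigcup_{s\in S}B_s\bigr)> rV_d$. That overlap can be spread across many members of $S$, each at distance in $(2\delta,2]$ from $x$ and each individually contributing less than $\frac{r}{d}V_{d-1}\leq rV_d$, in which case no single-ball intersection reaches $rV_d$ and the distance bound $d(x,S)\leq 2\delta$ is not forced. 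So the multi-ball clipping issue you flag is a genuine gap in the argument as written, not merely a technicality you failed to dispatch; your proposal is exactly as complete, and exactly as incomplete, as the published proof. (Theorem~\ref{thm:atleast-3^d} is unaffected, since there one only needs disjointness, i.e., distance greater than $2$ from \emph{every} member, which is a statement about the union and needs no single-ball reduction.)
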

\begin{proof}
Consider a richest group $S$ in a Nash Equilibrium. By Lemma~\ref{lem:no-one-large-d-exlusive-area}, no player outside of $S$
can get his coverage ball to contribute at least $(1-r)V_d$ volume to $S$, i.e, for every $x\notin S$ there is a player $s\in
S$ such that volume of intersection of balls $B_x,B_s$ of $x$ and $s$ respectively is at least $rV_d$. Let the distance
between centers of $B_x$ and $B_s$ be $2a$. Lemma~\ref{lem:bound-intersection} gives $2(1-a^2)^{\frac{d-1}{2}}V_{d-1}\geq$
Vol$(B_s\cap B_x)\geq r\cdot V_d$ which implies $2(1-a^2)^{\frac{d-1}{2}}\geq r\frac{V_d}{V_{d-1}}\geq \frac{r}{d}$ by
Lemma~\ref{lem:consecutive-dimension}. Rearranging we get $a\leq \sqrt{1-(\frac{r}{2d})^{\frac{2}{d-1}}} =$ say $\delta$. So
each player not in $S$ is at a distance of at most $2\delta$ from some player of $S$. Therefore the total volume $A$ is
covered by the volume $A'_{S}$ of the union of the family of balls of radius $2\delta+1$ centered at members of $S$. By
Lemma~\ref{lem:bound-on-union} we have $\frac{A_S}{A}\geq \frac{A_S}{A'_S}\geq \frac{1}{(2\delta+1)^d}$.
\end{proof}

\section{The Graph Case}
\label{sec:graph-case}

In this section we consider the discrete version of the problem where players form the vertex set of an undirected graph. The
coverage of a vertex is its closed neighborhood, i.e., a vertex covers itself and all its neighbors. We assume the same
utility function as before: Each member $x$ belonging to a group $S$ has expected utility given by $E[u(x)]=
\Big(\frac{M}{|S|}\Big)^{1-r}\cdot \frac{|A_S|}{|A|}$ where $M$ is the total money, $A_S$ is the union of the closed
neighborhoods of the vertices in $S$ and $A$ is the vertex set of the graph. We first show a preliminary lemma which bounds
the contribution to a richest group in a Nash Equilibrium by any vertex which is not in the richest group. This lemma can be
viewed as a discrete version of Lemma~\ref{lem:no-one-large-d-exlusive-area}.

\begin{lemma}
Let $G=(V,E)$ be an undirected graph with maximum degree $\Delta$. Let $S$ be a richest group in a Nash Equilibrium. Then
there is no player $i \notin A_S$ who can add at least $(1-r)(\Delta + 1)$ vertices to the set $A_S$ currently covered by $S$.
\label{lem:bound-max-degree-richest}
\end{lemma}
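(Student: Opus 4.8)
The plan is to mirror the proof of Lemma~\ref{lem:no-one-large-d-exlusive-area} almost verbatim, with the role played there by the unit-ball volume $V_d$ now played by $\Delta+1$, the maximum possible size of a closed neighborhood. First I would argue by contradiction: suppose some vertex $i \notin A_S$ (hence $i \notin S$, since $S \subseteq A_S$) can enlarge the covered set by at least $(1-r)(\Delta+1)$ vertices, so that after $i$ joins $S$ the covered set has size at least $|A_S|+(1-r)(\Delta+1)$. Because the partition is a Nash Equilibrium, the operation of moving $i$ into $S$ is not incentivized, so either the per-person utility of $S$ does not increase or that of $i$ does not. Since $S$ is richest, $i$'s current utility is at most that of $S$, and the new per-person utility of $i$ equals that of $S$; in either case I therefore obtain the single inequality
\begin{equation*}
\Big(\frac{M}{|S|+1}\Big)^{1-r}\cdot\frac{|A_S|+(1-r)(\Delta+1)}{|A|} \;\le\; \Big(\frac{M}{|S|}\Big)^{1-r}\cdot\frac{|A_S|}{|A|},
\end{equation*}
exactly as in the continuous case.

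Next I would set $\beta = |A_S|/(\Delta+1)$. The key structural fact is that every vertex covers at most $\Delta+1$ vertices (itself plus its at most $\Delta$ neighbors), so $|S|(\Delta+1) \ge |A_S|$, i.e.\ $|S| \ge \beta$. Using that $f(x)=\frac{x}{x+1}$ is increasing on $(0,\infty)$ gives $\frac{|S|}{|S|+1} \ge \frac{\beta}{\beta+1}$, and combining this with the displayed inequality (after cancelling the common factors $M$ and $|A|$) yields
\begin{equation*}
\Big(\frac{\beta}{\beta+1}\Big)^{1-r} \;\le\; \frac{|A_S|}{|A_S|+(1-r)(\Delta+1)}.
\end{equation*}
Writing $1-r = \frac{1}{t}$ and rearranging turns this into $\bigl(1+\tfrac{(1-r)(\Delta+1)}{|A_S|}\bigr)^{t} \le 1 + \tfrac{1}{\beta}$.

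Finally I would apply Bernoulli's inequality with $x = \frac{(1-r)(\Delta+1)}{|A_S|} > 0$ and $q = t = \frac{1}{1-r} > 1$. Since $t(1-r)=1$ and $\frac{1}{\beta} = \frac{\Delta+1}{|A_S|}$, the quantity $1 + \frac{1}{\beta} = 1 + \frac{t(1-r)(\Delta+1)}{|A_S|}$ is, by Bernoulli, strictly smaller than $\bigl(1+\frac{(1-r)(\Delta+1)}{|A_S|}\bigr)^{t}$, which contradicts the bound $\le 1+\frac{1}{\beta}$ just derived. This contradiction establishes the lemma.

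I expect the only genuinely discrete step to be the coverage bound $|S|(\Delta+1)\ge|A_S|$, the analog of ``each player covers volume $V_d$'': in the graph setting closed neighborhoods may overlap so the union can be far smaller than $|S|(\Delta+1)$, but the inequality points in the direction we need, so it causes no obstacle. Every subsequent manipulation is algebraically identical to Lemma~\ref{lem:no-one-large-d-exlusive-area}, so the argument carries over with no new difficulty; the whole proof is essentially a substitution $V_d \mapsto \Delta+1$.
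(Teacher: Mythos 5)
Your proposal is correct and follows the paper's own proof essentially verbatim: the same contradiction setup, the same combined Nash-Equilibrium inequality, the same substitution $\beta = |A_S|/(\Delta+1)$ exploiting $|S|(\Delta+1)\ge |A_S|$, and the same application of Bernoulli's inequality with $q=t=\frac{1}{1-r}$. The substitution $V_d \mapsto \Delta+1$ you describe is exactly how the paper adapts Lemma~\ref{lem:no-one-large-d-exlusive-area} to the graph setting.
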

\begin{proof}
Suppose a player $i \notin A_S$ can add at least $(1-r)(\Delta + 1)$ vertices to the set $A_S$ currently covered by $S$.
However, since it is a Nash Equilibrium, either the new expected utility of $S$ on adding this player is less than or equal to
the current expected utility of $S$, hence $S$ would have no incentive in adding the player $i$. Or else the player $i$ would
not have any incentive to move to $S$, as the projected new expected utility of $i$ is less than or equal to its current
expected utility. Since $S$ is a richest group, both these conditions combine to give:
\begin{equation}
 \Big(\frac{M}{|S|+1}\Big)^{1-r}\cdot \frac{A_S + (1-r)(\Delta + 1)}{A} \leq \Big(\frac{M}{|S|}\Big)^{1-r}\cdot \frac{A_S}{A}
\label{eq:graph-case}
\end{equation}
As each player has degree at most $\Delta$ we have $|S|(\Delta + 1) \geq A_S$ (with equality only if the closed neighborhoods
of the vertices of $S$ are pairwise disjoint). Since $f(x)=\frac{x}{x+1}$ is an increasing function we have $\frac{|S|}{|S|+1}
\geq \frac{\beta}{\beta+1}$ where $\beta = \frac{A_S}{\Delta + 1}$. Combining with Equation~\ref{eq:graph-case} gives
$$ \Big(\frac{\beta}{\beta+1}\Big)^{1-r} \leq \Big(\frac{|S|}{|S|+1}\Big)^{1-r} \leq \frac{A_S}{A_S + (1-r)(\Delta + 1)} $$
Rearranging and setting $1-r=\frac{1}{t}$ implies
$$ \frac{A_S+(1-r)(\Delta + 1)}{A_S} \leq \Big(\frac{\beta+1}{\beta}\Big)^{\frac{1}{t}} \Longrightarrow
\Big(1+ \frac{(1-r)(\Delta + 1)}{A_S}\Big)^{t} \leq 1+ \frac{1}{\beta}$$
Bernoulli's inequality states if $x,q\in \mathbb{R}$ and $x>-1, q>1$ then $(1+x)^{q} > 1+qx$. Applying the inequality for
$x=\frac{(1-r)(\Delta + 1)}{A_S}$ and $q=t=\frac{1}{1-r}>1$ gives
$$ 1+\frac{1}{\beta} = 1+ \frac{t(1-r)(\Delta + 1)}{A_S} < \Big(1+ \frac{(1-r)(\Delta + 1)}{A_S}\Big)^{t} \leq 1+ \frac{1}{\beta}$$
which is a contradiction.
\end{proof}

In the next theorem we show if the topology is the class of bounded-degree regular graphs, then in any Nash Equilibrium there
always exists a group which covers a constant fraction of the total number of vertices.

\begin{theorem}
Let $G=(A,E)$ be a $f$-regular graph. In any Nash Equilibrium there always exists a group covering a constant fraction of the
total number of vertices where the constant is $\dfrac{1}{\frac{f-1}{r(f+1)}+1}$. \label{thm:regular-graphs}
\end{theorem}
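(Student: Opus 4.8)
The plan is to apply the discrete exclusivity bound of Lemma~\ref{lem:bound-max-degree-richest} and then double count the edges crossing the boundary of the covered set. Fix a Nash Equilibrium and let $S$ be a richest group, writing $A_S$ for the set of vertices it covers. Since $G$ is $f$-regular we have $\Delta = f$ and every closed neighborhood has exactly $f+1$ vertices. By Lemma~\ref{lem:bound-max-degree-richest} no vertex $i \notin A_S$ can add at least $(1-r)(f+1)$ vertices to $A_S$; since $i$ contributes the vertices of $N[i] \setminus A_S$ and $|N[i]| = f+1$, this forces $|N[i] \cap A_S| > r(f+1)$. As $i \notin A_S$, every such common vertex is a genuine neighbor of $i$, so each vertex outside $A_S$ has strictly more than $r(f+1)$ neighbors inside $A_S$.

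Next I would estimate the size of the cut, i.e. the number $C$ of edges with one endpoint in $A_S$ and the other in $A \setminus A_S$, from both sides. Counting from the outside, each of the $|A| - |A_S|$ vertices of $A \setminus A_S$ sends more than $r(f+1)$ edges into $A_S$, giving the lower bound $C > (|A| - |A_S|)\,r(f+1)$. Counting from the inside, the crucial observation is that any vertex $u \in A_S$ incident to a cut edge cannot belong to $S$ itself (a member of $S$ has its entire closed neighborhood inside $A_S$); hence $u \in A_S \setminus S$ and, being covered by $S$, is adjacent to at least one member of $S$. That edge stays inside $A_S$, so at most $f-1$ of $u$'s $f$ edges cross the cut. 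Summing over the at most $|A_S|$ candidate vertices yields $C \le (|A_S| - |S|)(f-1) \le |A_S|(f-1)$.

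Combining the two estimates gives $(|A| - |A_S|)\,r(f+1) < |A_S|(f-1)$, and rearranging isolates $\frac{|A_S|}{|A|} > \frac{r(f+1)}{(f-1) + r(f+1)} = \frac{1}{\frac{f-1}{r(f+1)}+1}$, which is exactly the claimed constant, and since $S$ is a richest group this exhibits the desired large group.

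The step I expect to be the main obstacle is the inside count: the naive bound that each vertex of $A_S$ has at most $f$ cut edges only gives the weaker constant $\frac{r(f+1)}{f + r(f+1)}$, so the whole improvement hinges on noticing that every boundary vertex must \emph{spend} at least one of its edges on a member of $S$, lowering its cut contribution from $f$ to $f-1$. One should also track the strictness of the inequalities (the lemma gives a strict surplus outside, which propagates to the strict final bound) and note that regularity is precisely what makes the outside degree count uniform; the same argument would not directly deliver a clean constant for merely bounded-degree, non-regular graphs.
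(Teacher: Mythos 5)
Your proposal is correct and follows essentially the same route as the paper's proof: apply Lemma~\ref{lem:bound-max-degree-richest} to conclude that every vertex outside $A_S$ has at least $r(f+1)$ neighbors inside $A_S$, then double count the cut edges, using the observation that only vertices of $A_S\setminus S$ touch the cut and each such vertex spends at least one edge on a member of $S$, giving the upper bound $|A_S|(f-1)$. The only (immaterial) difference is that you track strictness of the inequalities slightly more carefully than the paper does.
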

\begin{proof}
Consider a richest group $S$ in a Nash Equilibrium. Let $A_S$ be the vertices covered by $S$, i.e, $A_S$ is the union of the
closed neighborhoods of the vertices of $S$. Denote by $\overline{A_S}$ the set $A\setminus A_S$. Since the graph is
$f$-regular the size of the closed neighborhood of every vertex is $f+1$. By Lemma~\ref{lem:bound-max-degree-richest}, every
vertex $x\notin A_S$ must add less than $(1-r)(f+1)$ vertices to the set $A_S$. So each vertex in $\overline{A_S}$ has at
least $(f+1)-(1-r)(f+1) = r(f+1)$ neighbors in $A_S$. Let $\beta$ be the number of edges with one endpoint in $A_S$ and one
endpoint in $\overline{A_S}$. Thus $\beta \geq r|\overline{A_S}|(f+1)$. By the definition of $A_S$ as the union of the closed
neighborhoods of vertices of $S$, only the vertices from $A_S\setminus S$ can have edges to $\overline{A_S}$. Each vertex of
$A_S\setminus S$ has at least one neighbor in $S$ and so $\beta \leq |A_S|(f-1)$. Combining the two bounds we have
$|A_S|(f-1)\geq \beta \geq r|\overline{A_S}|(f+1) = r(|A|-|A_S|)(f+1)$. Letting $\mu = \frac{f-1}{r(f+1)}$ we have $\mu
|A_S|\geq |A| - |A_S|$, i.e., $\dfrac{|A_S|}{|A|}\geq \dfrac{1}{\mu + 1} =\dfrac{1}{\frac{f-1}{r(f+1)}+1}$
\end{proof}

The general graph case does not seem to be hopeful. Recall in all the three topologies (the one-dimensional (line) space, the
$d$-dimensional Euclidean space and the bounded-degree regular graphs) considered so far, we were able to show the surprising
phenomenon that any \emph{richest} group in a Nash Equilibrium covers a constant fraction of the total volume/vertices. We
show this approach fails for general graphs, i.e., there exist graphs having a Nash Equilibrium in which no richest group
covers a constant fraction of the total number of vertices.

\begin{theorem}
There exist graphs which have a Nash Equilibrium in which no richest group covers a constant fraction of the total number of
vertices. \label{thm:richest-not-always-constant-fraction}
\end{theorem}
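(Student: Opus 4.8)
The plan is to exploit the one freedom that the $f$-regular argument of Theorem~\ref{thm:regular-graphs} does not have, namely irregularity, together with the asymmetry between a single small \emph{efficient} group and a single large \emph{inefficient} group. Concretely, for each target constant $c$ I would build a graph $G_n$ (with $n$ chosen large depending on $c$ and $r$) having exactly two groups in the claimed Nash Equilibrium: a small group $S$ whose per-person utility is very high yet which covers only a vanishing fraction of the vertices, and a large group $B$ that covers \emph{all} of the vertex set but whose per-person utility is tiny because its size is $\approx n$. The point is that the richest group is then $S$, and $S$ covers an $o(1)$-fraction.

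The construction I have in mind is a ``star plus filler''. Take one vertex $w$ adjacent to $D:=\lceil n^{(1+r)/2}\rceil$ degree-one petals $p_1,\dots,p_D$, and add $n-1-D$ further low-degree (possibly isolated) filler vertices. Put $S=\{w\}$ and let $B$ be all the remaining vertices. Then $A_S$ has size $D+1\approx n^{(1+r)/2}$, so $S$ covers a fraction $\approx n^{(r-1)/2}$; meanwhile every petal and filler is in $B$ and covers at least itself, while $w$ is covered by its petals, so $A_B$ is the entire vertex set and $B$ covers fraction $1$. The first step is a richness computation: the per-person utility of $S$ is $\propto (D+1)/n\approx n^{(r-1)/2}$, while that of $B$ is $\propto (n-1)^{-(1-r)}\approx n^{r-1}$, and since $r<1$ we have $n^{(r-1)/2}>n^{r-1}$, so $S$ is strictly the richest group and covers only an $n^{(r-1)/2}$-fraction, which is below $c$ for $n$ large. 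This pins down the exponent in $D$: we need $D\gg n^{r}$ (so that $S$ beats $B$ in efficiency) and $D=o(n)$ (so that the covered fraction vanishes), and $n^{(1+r)/2}$ satisfies both.

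The second step is the equilibrium check, and I would rule out both admissible moves. A merge can only combine $S$ and $B$; since $A_B$ is already the whole vertex set, merging adds no coverage to $B$ and only increases its size, so $B$'s members strictly lose and the merge is not incentivized. For defects, the only nontrivial case is a petal or filler $u\in B$ trying to join $S$: this enlarges $S$ from size $1$ to $2$ while contributing at most $|N[u]\setminus A_S|\le 1$ fresh covered vertices, so the coverage-gain factor is at most $1+1/A_S$, which for large $n$ is below the fixed dilution factor $2^{1-r}>1$; hence $w$ does not gain and rejects $u$. Symmetrically, $w$ gains nothing by joining $B$. Finally, a defect of a petal or filler to a newly formed singleton is either forbidden by the stated ``no defect to an empty group'' assumption or, even when allowed, is unprofitable, since a singleton of coverage $O(1)$ yields utility $\propto n^{-1}$ whereas remaining in $B$ yields $\propto n^{r-1}\gg n^{-1}$.

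I expect the main obstacle to be exactly this equilibrium check, and in particular making it parameter-robust: I must keep every vertex of $B$ of bounded degree so that no member of $B$ can ever supply enough new coverage to make $S$ want to absorb it, while still allowing $w$ to have degree $\approx n^{(1+r)/2}$ --- this is precisely the irregularity that lets $G_n$ escape the regular-graph guarantee of Theorem~\ref{thm:regular-graphs}. A secondary point worth emphasizing is that only the asymmetric pair $\{S,B\}$ is needed: the large group $B$ is stable because a ``covers-everything'' group can never gain coverage from any merge or incoming defector and therefore refuses all of them, so I avoid the much harder route of engineering large pairwise overlaps among many equal small groups (which, having disjoint or near-disjoint coverage, would otherwise always want to merge).
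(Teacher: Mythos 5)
Your construction is correct, and it takes a genuinely different route from the paper's. The paper builds $G_z$ as a clique of size $12z^r$ with a path of length $3z$ pendant at one clique vertex $v$; it then argues that in \emph{any} Nash Equilibrium of $G_z$ the groups must be singleton clique vertices or sets of path vertices, so the unique richest group is $\{v\}$, whose covered fraction $\frac{12z^r+1}{12z^r+3z}\to 0$. You instead exhibit one explicit asymmetric equilibrium $\{S,B\}$ on a star-plus-fillers graph: the high-degree center $w$ alone as the rich-but-small group, and everything else as a single huge group $B$ that already covers the whole vertex set. Your stability argument is structurally simpler because $B$'s full coverage lets it refuse every merge and every incoming defector for free (size strictly increases, coverage cannot), while $w$ refuses defectors because the dilution factor $2^{1-r}$ dominates the coverage gain $1+1/(D+1)$; the exponent choice $D\approx n^{(1+r)/2}$, sandwiched strictly between $n^r$ and $n$, is exactly what makes $S$ richest yet vanishing, and your checks that defection to an empty group is unprofitable (utility $\Theta(n^{-1})$ versus $\Theta(n^{r-1})$ in $B$, using $r>0$) close the one case the paper's other graph theorem has to assume away. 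What the paper's version buys is a stronger structural statement about \emph{all} equilibria of its graph (every richest group is $\{v\}$), at the cost of a longer case analysis ruling out Type I--Type I and Type I--Type II mergers; what yours buys is a shorter, fully explicit equilibrium certificate, which is all the theorem statement actually requires. Both proofs exploit the same underlying tension --- a single vertex with large but sublinear coverage will not accept members who contribute $O(1)$ new coverage --- so I consider your proposal a valid alternative proof.
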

\begin{proof}
Consider the family of graphs $G_z$ where $z$ is a parameter satisfying the equation
\begin{equation}
\Big(\dfrac{M}{1}\Big)^{1-r}\cdot \dfrac{12z^r}{|G_z|} > \Big(\dfrac{M}{2}\Big)^{1-r}\cdot \dfrac{12z^r + 4}{|G_z|}
\hspace{3mm} , i.e., \hspace{3mm} z^r> \dfrac{1}{3(2^{1-r}-1)}
\label{eq:lower-bound-on-z}
\end{equation}
We now describe the graph $G_z$: it contains a clique $K$ of size $12z^r$. We say these vertices are of Type I. A path $P$ of
length $3z$ is attached to a vertex (say $v$) of Type I. We call the vertices of the path $P$ (excluding $v$) as the vertices
of Type II.

First we show no two vertices $x,y$ of Type I merge. If $v\notin \{x,y\}$ then $x$ and $y$ will not merge as they both have
the same coverage. So without loss of generality let $x=v$. Then the initial expected utility of $y$ is
$\Big(\dfrac{M}{1}\Big)^{1-r}\cdot \dfrac{12z^r}{|G_z|}$ and the expected utility of the group $\{x,y\}$ is
$\Big(\dfrac{M}{2}\Big)^{1-r}\cdot \dfrac{12z^r + 1}{|G_z|}$. Equation~\ref{eq:lower-bound-on-z} implies
$\Big(\dfrac{M}{1}\Big)^{1-r}\cdot \dfrac{12z^r}{|G_z|} > \Big(\dfrac{M}{2}\Big)^{1-r}\cdot \dfrac{12z^r + 4}{|G_z|} >
\Big(\dfrac{M}{2}\Big)^{1-r}\cdot \dfrac{12z^r + 1}{|G_z|}$. Hence no two vertices of Type I merge.

We next show no vertex $p$ of Type I merges with a vertex $q$ of Type II. The initial expected utility of $p$ is at least
$\Big(\dfrac{M}{1}\Big)^{1-r}\cdot \dfrac{12z^r}{|G_z|}$ with equality if $p\neq v$. The expected utility of the group
$\{p,q\}$ will be at most $\Big(\dfrac{M}{2}\Big)^{1-r}\cdot \dfrac{12z^r + 4}{|G_z|}$ as any vertex of Type II can add at
most 3 vertices to coverage of a vertex of Type I. Equation~\ref{eq:lower-bound-on-z} implies no vertex of Type I will merge
with a vertex of Type II.

So a group in any Nash Equilibrium has to either be a single vertex of Type I or a set of Type II vertices. In the first case
the maximum expected utility will be for the group formed by $v$ alone  and is given by
\begin{equation}
\Big(\frac{M}{1}\Big)^{1-r}\cdot \frac{12z^r + 1}{|G_z|} = \hspace{1mm}say \hspace{2mm}U_1
\label{eq:first-kind-of-group}
\end{equation}
For the second case if the group consist of $b$ vertices of Type II then the expected utility of this group is
$\Big(\frac{M}{b}\Big)^{1-r}\cdot \frac{C}{|G_z|} = $say $U_2$ where $C$ is the coverage of the group. The coverage of any
vertex of Type II is at most three implies $C\leq 3b$ with equality only if the coverages of the members of the group are
pairwise disjoint. So $U_2 = \Big(\frac{M}{b}\Big)^{1-r}\cdot \frac{C}{|G_z|} \leq \Big(\frac{M}{b}\Big)^{1-r}\cdot
\frac{3b}{|G_z|} = (M)^{1-r}\cdot \frac{3b^{r}}{|G_z|} \leq (M)^{1-r}\cdot \frac{3(3z)^{r}}{|G_z|} <
\Big(\frac{M}{1}\Big)^{1-r}\cdot \frac{12z^r + 1}{|G_z|} = U_1$ as $b\leq 3z$ and $r\in (0,1)$. Therefore the only richest
group in any Nash Equilibrium in $G_z$ is the group $\{v\}$. The fraction of vertices covered by v is $\frac{12z^r + 1}{12z^r
+ 3z} < \frac{12z^r + 3z^{r}(2^{1-r}-1)}{12z^r + 3z} = \frac{4 + (2^{1-r}-1)}{4 + z^{1-r}} = \frac{3+2^{1-r}}{3 + z^{1-r}}$
which tends to 0 as we increase $z$ since $r\in (0,1)$ (Equation~\ref{eq:lower-bound-on-z} only imposed a lower bound on $z$
and hence there is no issue with increasing $z$ arbitrarily). So there exist graphs which have a Nash Equilibrium in which no
richest group covers a constant fraction of the total number of vertices.
\end{proof}

Theorem~\ref{thm:richest-not-always-constant-fraction} implies we need different techniques than the ones used above to
resolve the general graph case. However, under the assumption that \emph{defection to an empty group} is not allowed, we can
show given any constant $c < 1$ there exists a graph $G_c$ and a Nash Equilibrium in $G_c$ such that each group in the Nash
Equilibrium covers strictly less than a $c$-fraction of the total number of vertices.  We now explicitly construct such
graphs. Consider the family of graphs $G_{k,\ell}$: it has a clique of size $k$ formed by the vertices
$\{v_1,v_2,\ldots,v_k\}$. We call these vertices as the \textbf{primary} vertices. Each primary vertex $v_{i}$ has $\ell$
leaves attached to it. We denote these \textbf{secondary} vertices attached to the primary vertex $v_i$ by $L(v_i) =
\{v_{i,1},v_{i,2},\ldots,v_{i,\ell}\}$. We note $|G_{k,\ell}| = k+k\ell$.


\begin{lemma}
\label{lem:Nash-graph-case} If $k$ and $\ell$ satisfy
\begin{equation}
\label{eq:k-and-l-condition}
k > \ell \Big(\frac{2-f(\ell)}{f(\ell)-1}\Big) \hspace{3mm} where \hspace{3mm} f(\ell) = \Big(1 + \frac{1}{\ell + 1}\Big)^{1-r}
\end{equation}
then the groups $S_1,S_2,\ldots,S_k$ defined by $S_i = L(v_i)\cup \{v_i\}$ form a Nash Equilibrium for the graph $G_{k,\ell}$.
\end{lemma}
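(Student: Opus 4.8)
The plan is to exploit the full symmetry of $G_{k,\ell}$ and simply check that none of the three admissible moves --- a merge of two groups, a leaf defecting, or the primary defecting --- strictly helps everyone involved. First I would record the two quantities that drive every utility comparison: each proposed group $S_i=L(v_i)\cup\{v_i\}$ has size $|S_i|=\ell+1$, and its coverage is exactly the $k$ primary vertices together with the $\ell$ leaves of $v_i$, so $|A_{S_i}|=k+\ell$ while $|A|=k+k\ell$. Indeed $v_i$ is adjacent to every other primary vertex and to its own leaves, whereas each leaf covers only itself and $v_i$. Because all $k$ groups are isomorphic, they share one common per-person utility $U_0=\big(\tfrac{M}{\ell+1}\big)^{1-r}\tfrac{k+\ell}{k+k\ell}$; in particular every group is a richest group and any two players involved in a prospective move both start from $U_0$, so each move is incentivized iff its (common) post-move utility strictly exceeds $U_0$.

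Next I would dispatch the three moves by computing the post-move coverage in each case. A merge of $S_i$ and $S_j$ yields coverage $k+2\ell$ (all primaries plus the leaves of both $v_i$ and $v_j$) at size $2\ell+2$; writing $(2\ell+2)^{1-r}=2^{1-r}(\ell+1)^{1-r}$, the move fails to be incentivized exactly when $k+2\ell\le 2^{1-r}(k+\ell)$. A leaf $v_{i,m}$ joining $S_j$ enlarges that group's coverage by a single vertex, namely $v_{i,m}$ itself (its other neighbor $v_i$ is already covered), giving coverage $k+\ell+1$ at size $\ell+2$; after cancelling $M^{1-r}$ and $|A|$ this reduces to $1+\tfrac{1}{k+\ell}\le f(\ell)$. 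Finally, the primary $v_i$ joining $S_j$ enlarges the coverage by the $\ell$ leaves of $v_i$, giving coverage $k+2\ell$ at size $\ell+2$; using $f(\ell)=\big(\tfrac{\ell+2}{\ell+1}\big)^{1-r}$, the requirement that this not be incentivized simplifies to $\ell(2-f(\ell))\le k(f(\ell)-1)$, which is precisely the hypothesis \eqref{eq:k-and-l-condition}.

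The crux is that the primary-defection inequality is the binding one and the other two conditions follow from it. Since $1+\tfrac{1}{\ell+1}\le 2$ and $x\mapsto x^{1-r}$ is increasing, $f(\ell)\le 2^{1-r}$, so the hypothesis $k+2\ell< f(\ell)(k+\ell)$ gives $k+2\ell<2^{1-r}(k+\ell)$ and rules out the merge; and since $\ell\ge 1$ forces $f(\ell)>1+\tfrac{\ell}{k+\ell}\ge 1+\tfrac{1}{k+\ell}$, the leaf inequality holds too. I would also note in passing that $1<f(\ell)<2$, which keeps both $2-f(\ell)$ and $f(\ell)-1$ positive and makes the rearrangement into \eqref{eq:k-and-l-condition} sign-safe.

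The one move I expect to be the genuine obstacle --- and the reason the standing assumption that \emph{defection to an empty group is forbidden} is indispensable --- is the primary $v_i$ leaving $S_i$ to form its own singleton. Alone, $v_i$ keeps its entire coverage $k+\ell$ but no longer divides the prize, so its utility jumps to $(\ell+1)^{1-r}U_0>U_0$ for every $r\in(0,1)$. Hence without that assumption the partition could never be an equilibrium for any $k,\ell$; invoking it is exactly what removes this move, after which \eqref{eq:k-and-l-condition} governs all remaining moves. The main effort of the write-up is therefore the clean algebraic reduction of the primary-to-$S_j$ comparison to the stated form, together with the two short monotonicity arguments showing it dominates the merge and leaf conditions.
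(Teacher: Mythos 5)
Your proof is correct and follows essentially the same route as the paper's: the same coverage counts ($k+\ell$ per group, $k+2\ell$ after a merge or after a primary vertex defects, $k+\ell+1$ after a leaf defects) and the same algebraic reductions, with the merge and leaf conditions derived as consequences of the primary-defection condition, which is exactly the hypothesis in Equation~\ref{eq:k-and-l-condition}. Your additional observation that a primary vertex would strictly gain by splitting off into a singleton group, so that the lemma implicitly relies on the prohibition of defecting to an empty group (an assumption the paper only makes explicit later, in Theorem~\ref{thm:graph-no-constant}), is accurate and fills a gap that the paper's own proof of the lemma leaves unaddressed.
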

\begin{proof}
The function $g(x)= \frac{2-x}{x-1}$ is decreasing in the interval $(1,2^{1-r})$. Also $f(\ell) < 2^{1-r}$ implies
\begin{equation}
\label{eq:gamma}
g(2^{1-r}) < g(f(\ell))\hspace{2mm} ,i.e.,\hspace{2mm} g(2^{1-r}) = \frac{2-2^{1-r}}{2^{1-r}-1} < \frac{2-f(\ell)}{f(\ell)-1} < \frac{k}{\ell}
\end{equation}
where the last step follows from Equation~\ref{eq:k-and-l-condition}. The quantity $g(2^{1-r})$ is a constant and we denote it
by say $\gamma$.

We show no two of the groups $S_1,S_2,\ldots,S_k$ will merge. The current expected utility of any $S_i$ is $\Big(\frac{M}{\ell
+ 1}\Big)^{1-r}\cdot \frac{k + \ell}{k + k\ell} =$ say $u_1$ and if any $S_i$ and $S_j$ merge then the expected utility of the
new merged group will be $\Big(\frac{M}{2(\ell + 1)}\Big)^{1-r}\cdot \frac{k + 2\ell}{k + k\ell} =$ say $u_2$. Now
\begin{align*}
 u_1 > u_2 &\Leftrightarrow \Big(\frac{M}{\ell + 1}\Big)^{1-r}\cdot \frac{k + \ell}{k + k\ell} > \Big(\frac{M}{2(\ell + 1)}\Big)^{1-r}\cdot \frac{k + 2\ell}{k + k\ell}\\
 &\Leftrightarrow 2^{1-r}(k + \ell) > (k + 2\ell)\\
 &\Leftrightarrow k > \ell \Big(\frac{2-2^{1-r}}{2^{1-r}-1}\Big) = \ell \cdot g(2^{1-r})
\end{align*}
which follows from Equation~\ref{eq:gamma}. So no two of the groups $S_1,S_2,\ldots,S_k$ will merge.

We next show no player will defect from one group to another. The number of players in the new group will be $\ell + 2$ and it
will have more coverage if a primary vertex defects rather than a secondary vertex. So it is enough to prove the defection of
a primary vertex to another group is not possible. Suppose the primary vertex $v_i$ defects to join the group $S_j$. The
initial expected utility of $v_i$ is $\Big(\frac{M}{\ell + 1}\Big)^{1-r}\cdot \frac{k + \ell}{k + k\ell}=$ say $u_3$ and the
expected utility of the group $\{v_i\}\cup S_j$ is $\Big(\frac{M}{\ell + 2}\Big)^{1-r}\cdot \frac{k + 2\ell}{k + k\ell}=$ say
$u_4$. Now
\begin{align*}
 u_3 > u_4 &\Leftrightarrow \Big(\frac{M}{\ell + 1}\Big)^{1-r}\cdot \frac{k + \ell}{k + k\ell} > \Big(\frac{M}{\ell + 2}\Big)^{1-r}\cdot \frac{k + 2\ell}{k + k\ell}\\
 &\Leftrightarrow \Big(\frac{\ell + 2}{\ell + 1}\Big)^{1-r}\cdot (k + \ell) > (k + 2\ell)\\
 &\Leftrightarrow f(\ell) \cdot (k + \ell) > (k + 2\ell) \\
 &\Leftrightarrow k > \ell \Big(\frac{2-f(\ell)}{f(\ell)-1}\Big)
\end{align*}
which holds by Equation~\ref{eq:k-and-l-condition}. Therefore no defection will take place. Since no merging or defection can
occur, the groups $S_1,S_2,\ldots,S_k$ form a Nash Equilibrium in the graph $G_{k,\ell}$.
\end{proof}

Equations~\ref{eq:k-and-l-condition} and ~\ref{eq:gamma} (given in proof of Lemma~\ref{lem:Nash-graph-case}) do not impose any
absolute upper bounds on $k$ or $\ell$. We use this fact in the following theorem to show for every positive constant $c < 1$
there exists a graph $G_c$ and a Nash Equilibrium in $G_c$ such that no group in the Nash Equilibrium covers at least a
$c$-fraction of the total number of vertices.

\begin{theorem}
\label{thm:graph-no-constant} Given any positive constant $c < 1$, under the assumption that defecting to an empty group is
not allowed, there exists a graph $G_c$ and a Nash Equilibrium in $G_c$ such that each group in the Nash Equilibrium covers
strictly less than a $c$-fraction of the total number of vertices.
\end{theorem}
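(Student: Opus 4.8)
The plan is to instantiate the family $G_{k,\ell}$ from Lemma~\ref{lem:Nash-graph-case} with a choice of $\ell$ and $k$ tailored to $c$, exploiting the fact that the equilibrium groups $S_i = L(v_i)\cup\{v_i\}$ each cover a fraction of the vertices that can be pushed below any prescribed threshold.

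First I would compute the coverage of a single equilibrium group. In $G_{k,\ell}$ the primary vertex $v_i$ is adjacent to all other primary vertices and to its $\ell$ leaves, so its closed neighborhood already contains $k+\ell$ vertices; each leaf $v_{i,j}$ has closed neighborhood $\{v_{i,j},v_i\}\subseteq N[v_i]$ and thus contributes nothing new. Hence $A_{S_i}=k+\ell$, and since $|G_{k,\ell}|=k+k\ell$, every group $S_i$ covers exactly the fraction $\frac{k+\ell}{k(1+\ell)}$ of the vertices. By symmetry all $k$ groups cover the same fraction, so it suffices to drive this single quantity strictly below $c$.

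Two requirements constrain $(k,\ell)$: the equilibrium condition $k > \ell\,\frac{2-f(\ell)}{f(\ell)-1}$ of Lemma~\ref{lem:Nash-graph-case}, and the coverage bound $\frac{k+\ell}{k(1+\ell)}<c$. I would fix $\ell$ first, large enough that $\frac{1}{\ell+1}<c$, equivalently $c(\ell+1)>1$; this is exactly the condition under which the limit $\lim_{k\to\infty}\frac{k+\ell}{k(1+\ell)}=\frac{1}{1+\ell}$ lies below $c$. With $\ell$ fixed, the coverage bound rearranges to $k>\frac{\ell}{c(\ell+1)-1}$, a finite lower bound precisely because $c(\ell+1)-1>0$. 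Both requirements are therefore lower bounds on $k$ (the first is a positive number since $1<f(\ell)<2$), so any integer $k$ exceeding both and with $k\ge 2$ satisfies them at once. Setting $G_c=G_{k,\ell}$ for such a pair, Lemma~\ref{lem:Nash-graph-case} certifies that $S_1,\ldots,S_k$ form a Nash Equilibrium, while by construction each group covers strictly less than a $c$-fraction of the vertices.

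The one genuinely load-bearing point, rather than a routine calculation, is the role of the hypothesis that defection to an empty group is forbidden. Since $v_i$ alone already covers all $k+\ell$ vertices of $A_{S_i}$, a primary vertex leaving $S_i$ to stand by itself would retain the same winning probability while no longer splitting the prize, multiplying its expected utility by the factor $(\ell+1)^{1-r}>1$; such a move would always be incentivized and would destroy the equilibrium. The assumption is thus not cosmetic but exactly what makes $S_1,\ldots,S_k$ stable, and it is consistent with Lemma~\ref{lem:Nash-graph-case} having been verified only against merges and against defections into already-occupied groups.
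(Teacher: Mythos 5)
Your proposal is correct and follows essentially the same route as the paper: instantiate $G_{k,\ell}$, invoke Lemma~\ref{lem:Nash-graph-case}, and drive the per-group coverage fraction $\frac{k+\ell}{k(1+\ell)}$ below $c$ by choosing $\ell$ and then $k$ large enough (the paper first bounds $\ell/k$ by the constant $1/\gamma$ via Equation~\ref{eq:gamma} and then picks $\ell$, whereas you solve directly for $k$ given $\ell$ with $c(\ell+1)>1$, but this is only a cosmetic difference in quantifier order). Your closing observation that a lone primary vertex already covers all of $A_{S_i}$, so that defection to an empty group would multiply its utility by $(\ell+1)^{1-r}$, correctly pinpoints why the stated assumption is load-bearing rather than cosmetic.
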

\begin{proof}
We set $G_c$ to be the graph $G_{k,\ell}$ where $k,\ell$ will be determined later. Consider the Nash Equilibrium in $G_c$
given by the groups $S_1,S_2,\ldots,S_k$ in Lemma~\ref{lem:Nash-graph-case}. The fraction of the total number of vertices
covered by any group $S_i$ is given by $\dfrac{k + \ell}{k + k\ell}  = \dfrac{1 + \frac{\ell}{k}}{1 + \ell} < \dfrac{1 +
\frac{1}{\gamma}}{1 + \ell}$ where the last step follows from Equation~\ref{eq:gamma}. Recalling $\gamma = g(2^{1-r}) =
\dfrac{2-2^{1-r}}{2^{1-r}-1}$ is a constant, we choose $\ell$ large enough so $c > \dfrac{1 + \frac{1}{\gamma}}{1 + \ell}$
which proves our theorem. We need to choose $k$ large enough to satisfy Equation~\ref{eq:k-and-l-condition} but this is not an
issue as we do not have any absolute upper bound constraints on either $k$ or $\ell$.
\end{proof}

\section{Conclusions and Open Problems}


In this paper we have suggested a game-theoretic model motivated by the DARPA Network Challenge. We analyze the structures of
the groups in Nash equilibria. We show for various topologies: a one-dimensional space (line), a $d$-dimensional Euclidean
space, and bounded-degree regular graphs; in any Nash Equilibrium there always exists a group which covers a constant fraction
of the total volume. The objective of events like the DARPA Network Challenge is to mobilize a large number of people quickly
so that they can cover a big fraction of the total area. Our results suggest that this objective can be met under certain
conditions.

However our ideas however do not generalize to all the graphs and we provide explicit examples of graphs for which our
techniques fail. Under an additional assumption that defecting to an empty group is not allowed, we show given any constant $c
< 1$ there exists a graph $G_c$ and a Nash Equilibrium in $G_c$ where each group in the Nash Equilibrium covers strictly less
than a $c$-fraction of the total number of vertices. It would be interesting to prove Theorem~\ref{thm:graph-no-constant}
without the assumption that defecting to an empty group is prohibited.


\bibliography{ref}

\appendix

\section{Omitted Proofs from Section~\ref{sec:results-and-techniques}}

\paragraph{\textbf{Proof of Theorem~\ref{thm:increasing-area}}}

\begin{proof}
Let $S$ be a group which covers volume $A_S = \lambda A$, where $A$ is the total volume. Since each member of $S$ contributes
volume~1 we have $|S|\geq A_S \geq \lambda A \geq \lambda^{\frac{1}{1-r}+1}A$ (because $\lambda \in (0,1)$). If each player in
$S$ has expected utility at least~$c$ then $c\leq E[u(x)] \leq \lambda \cdot \Big(\frac{M}{|S|}\Big)^{1-r}$, which implies
$\Big(\frac{M}{|S|}\Big)^{1-r}\geq \frac{c}{\lambda}$. So $M\geq |S|(\frac{c}{\lambda})^{\frac{1}{1-r}}\geq
\lambda^{\frac{1}{1-r}+1}A(\frac{c}{\lambda})^{\frac{1}{1-r}} = \lambda A c^{\frac{1}{1-r}}$, proving the lemma.
\end{proof}

\paragraph{\textbf{Proof of Theorem~\ref{thm:increasing-size}}}

\begin{proof}
Let $S$ be a group of size $k$. Then we have $N\geq |S| = k$ where $N$ is total number of players. Since each player can
contribute at most 1 exclusive volume we have $A_S\leq |S|\leq N$ where $A_S$ is volume covered by $S$. If $x\in S$ then we
have $c\leq E[u(x)] = \Big(\frac{M}{|S|}\Big)^{1-r}\cdot \frac{A_S}{A}\leq \Big(\frac{M}{|S|}\Big)^{1-r}\cdot \frac{N}{A}$.
Therefore we have $\Big(\frac{M}{|G|}\Big)^{1-r}\geq \frac{cA}{N}$ which implies $M\geq
|G|\Big(\frac{cA}{N}\Big)^{\frac{1}{1-r}} = k\Big(\frac{cA}{N}\Big)^{\frac{1}{1-r}}$.
\end{proof}

\end{document}